\documentclass[letterpaper]{article} 
\usepackage{aaai24}  
\usepackage{times}  
\usepackage{helvet}  
\usepackage{courier}  
\usepackage[hyphens]{url}  
\usepackage{graphicx} 
\urlstyle{rm} 
\usepackage{natbib}  
\usepackage{caption} 
\frenchspacing  
\setlength{\pdfpagewidth}{8.5in}  
\setlength{\pdfpageheight}{11in}  
%
\usepackage{algorithm}
\usepackage{algorithmic}

\usepackage{amssymb}
\usepackage{amsmath}
\usepackage{amsthm}
\usepackage{multirow}
\usepackage{thmtools,thm-restate}
\newtheorem{theorem}{Theorem}
\newtheorem{corollary}{Corollary}[theorem]

%
\pdfinfo{
/TemplateVersion (2024.1)
}

\setcounter{secnumdepth}{0} 

%


\title{Backpropagation Through Agents}
\author{
    Zhiyuan Li\textsuperscript{\rm 1},
    Wenshuai Zhao\textsuperscript{\rm 2},
    Lijun Wu\textsuperscript{\rm 1},
   Joni Pajarinen\textsuperscript{\rm 2}
}
\affiliations{
    \textsuperscript{\rm 1}School of Computer Science and Engineering, University of Electronic Science and Technology of China\\
     \textsuperscript{\rm 2}Department of Electrical Engineering and Automation, Aalto University\\


    zhiyuanli@std.uestc.edu.cn, \{wenshuai.zhao,joni.pajarinen\}@aalto.fi, wljuestc@sina.com 
%
}

\begin{document}

\maketitle

\begin{abstract}
A fundamental challenge in multi-agent reinforcement learning (MARL) is to learn the joint policy in an extremely large search space, which grows exponentially with the number of agents. Moreover, fully decentralized policy factorization significantly restricts the search space, which may lead to sub-optimal policies. In contrast, the auto-regressive joint policy can represent a much richer class of joint policies by factorizing the joint policy into the product of a series of conditional individual policies. While such factorization introduces the action dependency among agents explicitly in sequential execution, it does not take full advantage of the dependency during learning. In particular, the subsequent agents do not give the preceding agents feedback about their decisions. In this paper, we propose a new framework Back-Propagation Through Agents (BPTA) that directly accounts for both agents' own policy updates and the learning of their dependent counterparts. This is achieved by propagating the feedback through action chains. With the proposed framework, our Bidirectional Proximal Policy Optimisation (BPPO) outperforms the state-of-the-art methods. Extensive experiments on matrix games, StarCraftII v2, Multi-agent MuJoCo, and Google Research Football demonstrate the effectiveness of the proposed method.
\end{abstract}

\section{Introduction}

Multi-agent reinforcement learning (MARL) is a promising approach to many real-world applications that naturally comprise multiple decision-makers interacting at the same time, such as cooperative robotics \cite{10.5555/3545946.3598752}, traffic management \cite{DBLP:conf/atal/MaW20}, and autonomous driving \cite{DBLP:journals/corr/Shalev-ShwartzS16a}. Although reinforcement learning (RL) has recorded sublime success in various single-agent domains, trivially applying single-agent RL algorithms in this setting brings about the curse of dimensionality. In multi-agent settings, agents need to explore an extremely large policy space, which grows exponentially with the team size, to learn the optimal joint policy.

Existing popular multi-agent policy gradient (MAPG) frameworks \cite{10.5555/3295222.3295385, 10.5555/3504035.3504398, NEURIPS2022_9c1535a0, wang2023order, pmlr-v139-zhang21m} often directly represent the joint policy as the Cartesian Product of each agent’s fully independent policy. However, this factorization ignores the coordination between agents and severely limits the complexity of the joint policy, causing the learning algorithm to converge to a \textit{Pareto-dominant equilibrium} \cite{christianos2022pareto}. This phenomenon is commonly referred to as \textit{relative overgeneralization} \cite{DBLP:conf/aaaiss/WeiWFL18, wang2021dop} and can occur even in simple scenarios \cite{pmlr-v162-fu22d, ye2023global}. To tackle these issues, some recent works \cite{wang2023more, NEURIPS2022_69413f87, pmlr-v162-fu22d} present the joint policy in an \textit{auto-regressive} form based on the chain rule \cite{box2015time}. The auto-regressive model specifies that an agent's policy depends on its preceding agents' actions. In this way, the dependency among agents’ policies is explicitly considered and the expressive limitations of the joint policy can be significantly relaxed. However, they only take into account the preceding agents‘ actions during decision-making, i.e., the forward process, while disregarding reactions from subsequent agents during policy improvement, i.e., the backward process \cite{li2023ace}. This may lead to conflicting directions in policy updates for individual agents, where their local improvements may jointly result in worse outcomes. In contrast, the neural circuits in the central nervous system responsible for the sensorimotor loop consist of two internal models \cite{MULLER2021661}: 1) the forward model, which builds the causal flow by integrating the joint actions, and 2) the backward model, which maps the relation between an action and its consequence to invert the causal flow. Such two bidirectional models internally interact in order to enhance learning mechanisms.

In this paper, we aim to augment the existing MAPG framework with bidirectional dependency \cite{li2023ace}, i.e. forward and backward process, to provide richer peer feedback and align the policy improvement directions of individual agents with that of the joint policy. To this end, we propose \textit{Back-Propagation Through Agents} (BPTA), a multi-agent reinforcement learning framework that follows the Back-Propagation Through Time (BPTT) used for training recursive neural networks (RNN) \cite{DBLP:conf/emnlp/ChoMGBBSB14}. Specifically, BPTA begins by unfolding the execution sequence in agents. The actions passed to subsequent agents during the forward process will be integrated with their own actions and serve as latent variables \cite{kingma2022autoencoding} in the backward process, the reactions from the subsequent agents are propagated to the preceding agents through the variables using the reparameterization trick \cite{NIPS2015_bc731692}. Taking the feedback from subsequent agents into account allows each agent to learn from the consequences of the collective actions and adapt to the changing behavior of the team. Furthermore, dependent on such rich feedback agents can complete the causality loop: cyclic interaction between the forward and backward process. As a result, BPTA enables individuals to function as a whole and find a consistent improvement direction. We incorporate PPO with auto-regressive policy and BPTA and propose Bidirectional Proximal Policy Optimisation (BPPO). Empirically, in several tasks, including matrix game \cite{10.5555/295240.295800}, Google Research Football (GRF) \cite{Kurach_Raichuk_Stańczyk_Zając_Bachem_Espeholt_Riquelme_Vincent_Michalski_Bousquet_Gelly_2020}, StarCraft Multi-Agent Challenge Version 2 (SMACv2) \cite{ellis2022smacv2}, and Multi-agent MuJoCo (MA-MuJoCo), BPPO achieves better performance than baselines. 

Specifically, our contribution is summarized as follows.

\begin{itemize}
    \item We propose a novel framework BPTA that, for the first time, explicitly models feedback from action-dependent peer agents. In particular, BPTA allows derivatives to pass across agents during learning.
    \item Our proposed framework can be naturally integrated with existing conditional policy-gradient methods. We augment PPO with the auto-regressive policy under the BPTA framework and propose Bidirectional Proximal Policy Optimisation (BPPO).
    \item Finally, the effectiveness of the proposed method is verified in four cooperative environments, and empirical results show that the proposed method outperforms the state-of-the-art algorithms.
\end{itemize}

\section{Related Work}

Various works on MARL have been proposed to tackle cooperative tasks, including algorithms in which agents make decisions simultaneously and algorithms that coordinate agents' actions based on static or dynamic execution orders. 

\textbf{Simultaneous decision scheme.} Most algorithms tend to follow a simultaneous decision scheme, where agents' policies are only conditioned on their individual observations. One line of research extends PG from RL to MARL \cite{10.5555/3295222.3295385, 10.5555/3504035.3504398, wang2021dop, NEURIPS2022_9c1535a0, wang2023order, pmlr-v139-zhang21m}, adopting the Actor-Critic (AC) approach, where each actor explicitly represents the independent policy, and the estimated centralized value function is known as the critic. Under this scheme, in contrast to independent updates, some recent methods sequentially execute agent-by-agent updates, such as Rollout and Policy Iteration for a Single Agent (RPISA) \cite{9317713}, Heterogeneous PPO (HAPPO) \cite{kuba2022trust}, and Agent-by-agent Policy Optimization (A2PO) \cite{wang2023order}. Another line is value-based methods, where the joint Q-function is decomposed into individual utility functions following different interpretations of Individual-Global-Maximum (IGM) \cite{10.5555/3237383.3238080, 10.5555/3455716.3455894, pmlr-v97-son19a, wang2021qplex, pmlr-v162-wan22c}. VDN \cite{10.5555/3237383.3238080} and QMIX \cite{10.5555/3455716.3455894} provide sufficient conditions for IGM, however, suffer from structural constraints. QTRAN \cite{pmlr-v97-son19a} and QPLEX \cite{wang2021qplex} complete the representation capacity of the joint Q-function through optimization constraints and a dueling mixing network respectively, while it is impractical in complicated tasks. \citeauthor{pmlr-v162-wan22c} introduce Optimal consistency and True-Global-Max (TGM), then propose GVR to ensure the optimality. A special case is SeCA \cite{10.5555/3545946.3598674}, which factorizes the joint policy evaluation into a sequence of successive evaluations.

\textbf{Sequential decision scheme.} In this scheme, algorithms explicitly model the coordination among agents via actions. One perspective is the auto-regressive paradigm, where agents make decisions sequentially \cite{NEURIPS2022_69413f87, pmlr-v162-fu22d, ye2023global, wang2023more, li2023ace}. MAT \cite{NEURIPS2022_69413f87} transform MARL into a sequence modeling problem, and introduce \textit{Transformer} \cite{NIPS2017_3f5ee243} to generate solutions. However, MAT may fail to achieve the monotonic improvement guarantee as it does not follow the sequential update scheme. \citeauthor{wang2023more} derives the multi-agent conditional factorized soft policy iteration theorem by incorporating auto-regressive policy into SAC \cite{haarnoja2019soft}. ACE \cite{li2023ace} and TAD \cite{ye2023global} first cast the Multi-agent Markov decision process (MMDP) \cite{10.5555/3091574.3091594} as an equivalent single-agent Markov decision process (MDP), and solve the single-agent MDP with Q-learning and PPO, respectively. However, only ACE considers the reactions from subsequent agents by calculating the maximum Q-value over the possible actions of the successors. In another perspective, the interactions between agents are modeled by a coordination graph \cite{10.5555/3535850.3535976, pmlr-v162-yang22a}. However, the updates of the agents in the graph are independent of the subsequent agents.

In contrast, our proposed BPTA augmented auto-regressive method lies in the second category and is the first bidirectional PG-based MARL method. 

\section{Background}
\subsection{Problem Formulation}
In MARL, a \textit{decentralized partially observable Markov decision process} (Dec-POMDP) \cite{10.5555/2967142} is commonly applied to model the interaction among agents within a shared environment under partial observability. A Dec-POMDP is defined by a tuple $G = \left<\mathcal{N},\mathcal{S},\mathcal{A}, \mathcal{P},\Omega, O,\mathcal{R},\rho_0, \gamma \right>$, where $\mathcal{N} = \left\{ 1,\ldots,n \right\}$ is a set of agents, $s \in \mathcal{S}$ denotes the state of the environment, $\mathcal{A} = \prod_{i=1}^{n}A^i$ is the product of the agents' action spaces, namely the joint action space, $\Omega=\prod_{i=1}^{n}\Omega^i$ is the set of joint observations, and $\rho_0$ is the distribution of the initial state. At time step $t \in \mathbb{N}$, each agent $i \in \mathcal{N}$ takes an action $a^i_t$ according to its policy $\pi ^i(\cdot | o_t^i)$, where $o_t^i$ is drawn from the observation function $O(s_t, i)$. With the joint observation $\textbf{o}_t = \left\{ o^1_t, \ldots,o^n_t \right\}$ and the joint action of agents $\textbf{a}_t = \left\{ a^1_t, \ldots,a^n_t \right\}$ drawn from the joint policy $\boldsymbol\pi \left ( \cdot | \textbf{o}_t \right )$, the environment moves to a state $s'$ with probability $\mathcal{P}\left ( s' | s, a_t \right )$, and each agent receives a joint reward $r_t = \mathcal{R}\left ( s_t, a_t \right ) \in \mathbb{R}$. The state value function, the state-action value function, and the advantage function are defined as $V_{\boldsymbol\pi}(s) \triangleq \mathbb{E}_{\textbf{a}_{0:\infty \sim \boldsymbol\pi} , s_{1:\infty \sim \mathcal{P}}}[\Sigma_{t=0}^\infty\gamma ^{t}r_{t}|s_{0}=s]$, $Q_{\boldsymbol\pi}(s,\textbf{a}) \triangleq \mathbb{E}_{\textbf{a}_{1:\infty \sim \boldsymbol\pi} , s_{1:\infty \sim \mathcal{P}}}[\Sigma_{t=0}^\infty\gamma ^{t}r_{t}|s_{0}=s, \textbf{a}_0=\textbf{a}]$, and $A_{\boldsymbol\pi}(s,\textbf{a}) \triangleq Q_{\boldsymbol\pi}(s,\textbf{a}) - V_{\boldsymbol\pi}(s) $. The agents aim to maximize the expected total reward: 
\begin{equation}
\label{objective}
\mathcal{J}(\boldsymbol\pi)\triangleq \mathbb{E}_{s_{0},\textbf{a}_0,\dots }\left [ \sum_{t=0}^{\infty}\gamma ^{t}r_t \right ],
\end{equation}
where $s_0 \sim \rho _0 (s_0), \textbf{a}_t \sim \boldsymbol\pi (\textbf{a}_t|s_t)$. In order to keep the notation concise, we will use state $s$ in the subsequent equations.

\subsection{Independent Multi-Agent Stochastic Policy Gradient}
In cooperative MARL tasks, popular PG methods follow fully independent factorization: $\boldsymbol\pi(\textbf{a}|s)=\prod_{i=1}^{n}\pi_{\theta _{i}}(a^i|s)$. With such a form, following along the standard Stochastic Policy Gradient Theorem, \citeauthor{DBLP:conf/aaaiss/WeiWFL18} derive the independent multi-agent policy gradient estimator for the cooperative MARL:
\begin{equation}
\begin{aligned}
\bigtriangledown _{\theta _i}\mathcal{J}(\boldsymbol\theta) &=\int _\mathcal{S}\rho^{\boldsymbol\pi}(s)\int _{\mathcal{A}^{i}}\bigtriangledown _{\theta _i}\pi_{\theta_i}(a^i|s) \\ 
&\int _{\mathcal{A}^{-i}}\prod_{j\neq i}^{n}\pi_{\theta_{j}}(a^{j}|s)Q_{\boldsymbol\pi}(s,\textbf{a})d\textbf{a}^{-i}da^ids,
\end{aligned}
\end{equation}
where the notation $-i$ indicates all other agents except agent $i$, $\mathcal{P}(s\to s^{\prime},t,\boldsymbol\pi)$ denotes the density at state $s^{\prime}$ after transitioning for $t$ time steps from state $s$, and $\rho^{\boldsymbol\pi}(s)=\int _\mathcal{S}\Sigma_{t=1}^\infty \gamma ^{t-1}\rho_0(s)\mathcal{P}(s\to s^{\prime},t,\boldsymbol\pi)$ is the (unnormalized) discounted distribution over states induced by the joint policy $\boldsymbol\pi$.

\section{Method}
This section considers an auto-regressive joint policy with fixed execution order $\{1,2,\dots,n\}$: 
\begin{equation}
\boldsymbol\pi(\textbf{a}|s)=\prod_{i=1}^{n}\pi_{\theta _{i}}(a^i|s, a^1,\dots,a^{i-1})
\end{equation}
Although such factorization introduces forward dependency among agents, it ignores the reaction of subsequent policy updates on the preceding actions. To achieve bidirectional dependency, we propose Back-Propagation Through Agents (BPTA) to pass gradients across agents. Specifically, we leverage the reparameterization trick and devise a new \textit{multi-agent conditional policy gradient theorem} that exploits the action dependency among agents. To cover any action-dependent policy, the relationship between the joint policy and individual policies can be stated as:
\begin{equation}
\label{eq:conditional}
\boldsymbol\pi(\textbf{a}|s)=\prod_{i=1}^{n}\pi_{\theta_{i}}(a^i|s,a^{\mathcal{F}_i})\;,
\end{equation}
where $\mathcal{F}_i$ denotes the set of agents on which agent $i$ has a forward dependency, and $a^{\mathcal{F}_i}$ are the actions taken by those agents.

\begin{figure*}[htbp!]
\centering
\includegraphics[width=0.7\textwidth]{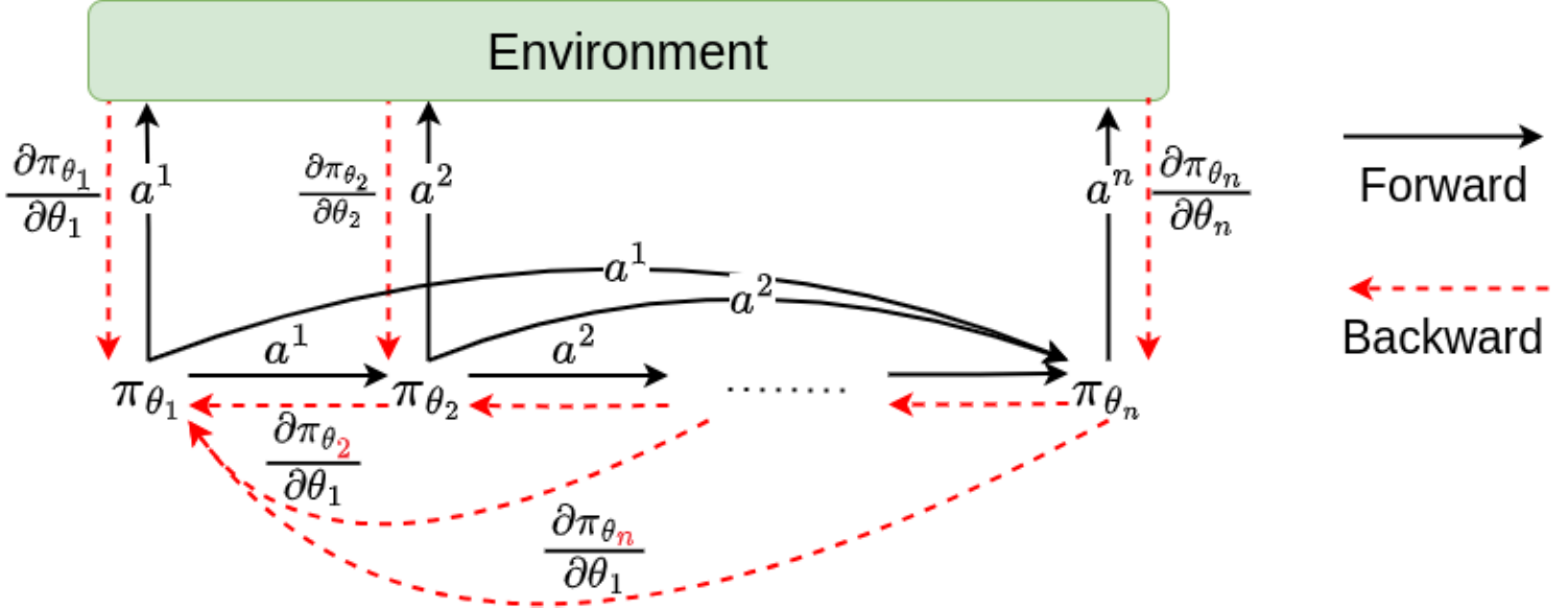} 
\caption{Learning framework of BPTA-augmented auto-regressive policy gradient algorithm. BPTA internally completes the causality loop by two processes: (1) the forward process, which generates the causal flow by action; and (2) the backward process, which inverts the causal flow by propagating the feedback.}
\label{bpta-ar}
\end{figure*}

\subsection{Back-Propagation Through Agents}
In the social context, joint action usually requires people to actively modify their own actions to reach a common action goal. Accordingly, joint action demands not only the integration of one’s own and others’ actions but also the corresponding consequences. However, most previous approaches assume that each agent only needs to account for its own learning process and completely disregarded the evaluation of its dependent actions result. In this section, we will show that our conditional gradient explicitly accounts for the effect of an agent's actions on the policies of its backward-dependent peer agents by additionally including agent feedback passed through the action dependency.

\begin{restatable}[Conditional Multi-Agent Stochastic Policy Gradient Theorem]{theorem}{conditional}
\label{conditional}
For any episodic cooperative stochastic game with n agents, the gradient of the expected total reward for agent $i$, who has a backward dependency on some other peer agents $\mathcal{B}_i$ using parameters $\theta_{\mathcal{B}_i}$, with respect to current policy parameters $\theta_i$ is:
\begin{equation}
\begin{aligned}
\label{eq:macpgt}
&\bigtriangledown _{\theta _i} \mathcal{J}(\boldsymbol\theta) =\int _\mathcal{S}\rho^{\boldsymbol\pi}(s)\Bigl[ \int _{\mathcal{A}^{i}}\underbrace{\bigtriangledown _{\theta _i}\pi_{\theta_i}(a^i|s,a^{\mathcal{F}_i})}_{\textnormal{Own Learning}} \\ 
&\qquad \int_{\mathcal{A}^{-i}}\pi_{\theta_{-i}}(a^{-i}|s^{\prime},a^{\mathcal{F}_{-i}}) \; + \; \int_{\mathcal{A}^i}\pi_{\theta_i}(a^i|s,a^{\mathcal{F}_i}) \\ 
&\qquad \int_{\mathcal{A}^{\mathcal{F}_i}}\pi_{\theta_{\mathcal{F}_i}}(a^{\mathcal{F}_i}|s,a^{\mathcal{F}_{\mathcal{F}_i} }) \\
&\int_{\mathcal{A}^{\mathcal{B}_i}}\underbrace{\bigtriangledown _{a^i} \pi_{\theta_{\mathcal{B}_i}}(a^{\mathcal{B}_i}|s,a^i,a^{\mathcal{F}_{\mathcal{B}_i} \setminus \{i\}})\bigtriangledown _{\theta_i}g(\theta_i, \varepsilon)}_{\textnormal{Peer Learning}}\Bigr] \\ 
&Q_{\boldsymbol\pi}(s,\textbf{a})d\textbf{a}^{-i}da^ids, 
\end{aligned}
\end{equation}
where $\mathcal{F}_{\mathcal{B}_i}$ indicates the set of agents on which $\mathcal{B}_i$ have forward dependencies.
\end{restatable}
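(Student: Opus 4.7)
The plan is to mirror the classical stochastic policy gradient argument while routing the gradient through two distinct channels: a score-function channel that captures the direct parameter dependence of $\pi_{\theta_i}$, and a reparameterization channel that captures the implicit dependence of the backward-peer factors $\pi_{\theta_{\mathcal{B}_i}}$ on $\theta_i$ through the sampled action $a^i$. I would start from
\[
\nabla_{\theta_i}\mathcal{J}(\boldsymbol\theta) = \int_{\mathcal{S}}\rho^{\boldsymbol\pi}(s)\int_{\mathcal{A}}\nabla_{\theta_i}\bigl[\boldsymbol\pi(\textbf{a}|s)\bigr]\,Q_{\boldsymbol\pi}(s,\textbf{a})\,d\textbf{a}\,ds,
\]
obtained by the usual infinite-horizon unrolling, in which the implicit $\theta_i$-dependence of $\rho^{\boldsymbol\pi}$ and $Q_{\boldsymbol\pi}$ telescopes into the discounted state distribution exactly as in the independent case. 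Expanding $\boldsymbol\pi$ via the auto-regressive factorization \eqref{eq:conditional} and applying the product rule then produces (i) a direct term in which $\nabla_{\theta_i}$ hits the factor $\pi_{\theta_i}(a^i|s,a^{\mathcal{F}_i})$, and (ii) a sum of indirect terms in which $\nabla_{\theta_i}$ reaches a factor $\pi_{\theta_j}$ for some $j\in\mathcal{B}_i$ through its dependence on the $a^i$ argument. Factors $\pi_{\theta_k}$ for $k\notin\{i\}\cup\mathcal{B}_i$ contribute nothing since they neither carry $\theta_i$ nor depend on $a^i$.

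For the direct term I would keep the score-function treatment and integrate out the actions of agents whose factors do not involve $\theta_i$; this collapses the integrand to $\nabla_{\theta_i}\pi_{\theta_i}(a^i|s,a^{\mathcal{F}_i})$ multiplied by the remaining agents' joint density, reproducing the Own Learning integrand. For the indirect terms I would apply the reparameterization $a^i = g(\theta_i,\varepsilon)$, which turns $a^i$ into a differentiable function of $\theta_i$; the chain rule then rewrites each contribution as $\nabla_{a^i}\pi_{\theta_j}(a^j|s,a^{\mathcal{F}_j})\,\nabla_{\theta_i}g(\theta_i,\varepsilon)$ for $j\in\mathcal{B}_i$. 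Aggregating over $j\in\mathcal{B}_i$ and reattaching the densities of $a^i$ and of its forward ancestors $a^{\mathcal{F}_i}$ (which supply the factors $\pi_{\theta_i}$ and $\pi_{\theta_{\mathcal{F}_i}}$ that appear in the statement) yields the Peer Learning integrand after regrouping the integrals over $\mathcal{A}^i$, $\mathcal{A}^{\mathcal{F}_i}$, and $\mathcal{A}^{\mathcal{B}_i}$.

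The hard part I anticipate is the book-keeping needed to keep the two channels consistent. In particular, I have to verify that treating $a^i$ as a silent integration variable in the own-learning factor while simultaneously reparameterizing it as $g(\theta_i,\varepsilon)$ inside the peer factors does not double-count the variation of $a^i$ with $\theta_i$; formally this holds because the product rule cleanly splits $\nabla_{\theta_i}$ of the factorization into terms acting on a single factor at a time, so the two representations pertain to disjoint pieces of the product. I also need to preserve the forward-dependency structure $a^{\mathcal{F}_{\mathcal{B}_i}\setminus\{i\}}$ when integrating over $\mathcal{A}^{-i}$, so that the conditioning of $\pi_{\theta_{\mathcal{B}_i}}$ matches the statement exactly. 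Once these bookkeeping points are settled, collecting the two contributions gives precisely \eqref{eq:macpgt}.
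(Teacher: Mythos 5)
Your proposal matches the paper's own derivation in all essentials: expand the factorized joint policy, apply the product rule so that $\nabla_{\theta_i}$ hits either $\pi_{\theta_i}$ (score-function / Own Learning term) or the backward-dependent factors $\pi_{\theta_{\mathcal{B}_i}}$ (handled via the reparameterization $a^i=g(\theta_i,\varepsilon)$ and the chain rule, giving the Peer Learning term), and fold the $Q$-function and state-distribution dependence into $\rho^{\boldsymbol\pi}$ by the standard unrolling. The paper merely organizes this as Terms A--C starting from $V_{\boldsymbol\pi}(s)$ and unrolls the $Q$-derivative explicitly, so your route is the same argument in a slightly different order.
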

\begin{proof}
See the Appendix for detailed proof.
\end{proof}

From Theorem \ref{conditional}, we note that the policy gradient for agent $i$ at each state has two primary terms. The first term $\bigtriangledown _{\theta _i}\pi_{\theta_i}(a^i|s,a^{\mathcal{F}_i})$ corresponds to the independent multi-agent policy gradient which explicitly differentiates through $\pi_{\theta_i}$ with respect to the current parameters $\theta_i$. This enables agent $i$ to model its own learning. By contrast, the second term $\bigtriangledown _{a^i} \pi_{\theta_{\mathcal{B}_i}}(a^{\mathcal{B}_i}|s,a^i,a^{\mathcal{F}_{\mathcal{B}_i} \setminus \{i\}})\bigtriangledown _{\theta_i}g(\theta_i, \varepsilon)$ aims to additionally account for how the consequences of the corresponding action on its backward dependent agents’ policies influence its direction of performance improvement. As a result, the peer learning term enables agents to adjust their own policies to those of action partners, which facilitates fast and accurate inter-agent coordination. Interestingly, the peer learning term, which evaluates the impact of an agent's actions on its peer agents, specifies auxiliary rewards for adapting its policy in accordance with these reactions. 

With Theorem \ref{conditional} and an auto-regressive joint policy, we are ready to present the learning framework of our BPTA-augmented auto-regressive policy gradient algorithm. As illustrated in Figure \ref{bpta-ar}, in the forward process, direct connections and skip connections \cite{he2015deep} connect the action of one predecessor agent to the input of subsequent agents, even those are not adjacent to it in execution order. As for the backward process described by dashed lines, in addition to the interactive feedback from the environment, there are alternative pathways provided by direct and skip connections, which allows successors to provide feedback to predecessors through gradients. Furthermore, these two types of processes are interleaved to allow for a causal flow loop within and across agents.

Our proposed algorithm can be conveniently integrated into most PG-based methods. Given the empirical performance and monotonic policy improvement of PPO \cite{schulman2017proximal}, we propose \textit{Bidirectional Proximal Policy Optimisation} (BPPO) to incorporate the proposed theorem with PPO. Following the sequential decision scheme, it is intuitive for BPPO to adopt the sequential update scheme \cite{wang2023order, kuba2022trust}, where the updates are performed in reverse order of the execution sequence. We provide comparisons of the simultaneous update scheme and the sequential update scheme in Appendix.

\begin{corollary}[Clipping Objective of BPPO]
Let $\boldsymbol\pi$ be an auto-regressive joint policy with fixed execution order $\{1,2,\dots,n\}$, and $  \tilde{\boldsymbol\pi}^{i+1:n}$ be the updated joint policy of agents set $\{i+1,\dots,n\}$. For brevity, we omit the preceding actions in the policy. Then the clipping objective of BPPO is:
\begin{equation}
\begin{aligned}
\label{eq:bppo}
&\mathbb{E}_{s\sim\rho^{\boldsymbol\pi}(s),\textbf{a} \sim \boldsymbol\pi, \varepsilon \sim p(\varepsilon)}\Bigl[\textnormal{min}\Bigl(\frac{\pi_{\theta_i}(a^i|s)}{\pi_{\theta_i}^{old}(a^i|s)}M^{i+1:n}+ \\
&\qquad \textnormal{detach}\Bigl(\frac{\pi_{\theta_i}(a^i|s)}{\pi_{\theta_i}^{old}(a^i|s)}\Bigr)\bigtriangledown _{a^i}M^{i+1:n}g(\theta_i, \varepsilon)\hat{A}(s,\textbf{a}), \\
&\qquad \textnormal{clip}\Bigl(\frac{\pi_{\theta_i}(a^i|s)}{\pi_{\theta_i}^{old}(a^i|s)},1\pm \epsilon \Bigr)M^{i+1:n}+ \\
&\qquad \textnormal{detach}\Bigl(\textnormal{clip}\Bigl(\frac{\pi_{\theta_i}(a^i|s)}{\pi_{\theta_i}^{old}(a^i|s)},1\pm \epsilon \Bigr)\Bigr) \\
&\qquad \bigtriangledown _{a^i}M^{i+1:n}g(\theta_i, \varepsilon)\hat{A}(s,\textbf{a})\Bigr)\Bigr],
\end{aligned}
\end{equation}
where $M^{i+1:n}=\frac{\tilde{\boldsymbol\pi}^{i+1:n}(a^{i+1:n}|s)}{\boldsymbol\pi^{i+1:n}(a^{i+1:n}|s)}$, $\bigtriangledown _{a^i}M^{i+1:n}=\bigtriangledown _{a^i}\frac{\tilde{\boldsymbol\pi}^{i+1:n}(a^{i+1:n}|s)}{\boldsymbol\pi^{i+1:n}(a^{i+1:n}|s)}$, and $\textnormal{detach}()$ represents detaching the input from the computation graph, meaning that the input will not contain gradients. $\hat{A}(s,\textbf{a})$ is an estimator of the advantage function $A(s,\textbf{a})$ computed by GAE \cite{schulman2018highdimensional}.
\end{corollary}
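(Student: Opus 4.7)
The plan is to derive the clipping objective by combining three ingredients: Theorem~\ref{conditional} (the conditional multi-agent policy gradient), the sequential update scheme that performs updates in the reverse of the execution order, and the standard PPO clip-and-min methodology. First I would set up the per-agent surrogate. When it is agent $i$'s turn, agents $\{i+1,\dots,n\}$ have already been updated to $\tilde{\boldsymbol\pi}^{i+1:n}$, so in the HAPPO/A2PO style the multi-agent trust-region surrogate scales the joint advantage by the cumulative ratio $M^{i+1:n}$ over the updated block. Under the fixed execution order $\{1,\dots,n\}$, agent $i$'s backward-dependent set is exactly $\mathcal{B}_i = \{i+1,\dots,n\}$, so Theorem~\ref{conditional} is applicable with this identification.

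Next I would take $\bigtriangledown_{\theta_i}$ of this surrogate and match it term-by-term with Theorem~\ref{conditional}. The ``Own Learning'' contribution $\bigtriangledown_{\theta_i}\pi_{\theta_i}(a^i|s,a^{\mathcal{F}_i})$ becomes the standard likelihood-ratio term, which after importance sampling from $\pi_{\theta_i}^{old}$ equals $\bigtriangledown_{\theta_i}\!\bigl(\pi_{\theta_i}(a^i|s)/\pi_{\theta_i}^{old}(a^i|s)\bigr)\,M^{i+1:n}\hat{A}(s,\textbf{a})$. The ``Peer Learning'' contribution $\bigtriangledown_{a^i}\pi_{\theta_{\mathcal{B}_i}}(a^{\mathcal{B}_i}|\cdot)\,\bigtriangledown_{\theta_i}g(\theta_i,\varepsilon)$ is realized through the reparameterization $a^i = g(\theta_i,\varepsilon)$: by the chain rule, the dependence of $M^{i+1:n}$ on $\theta_i$ via $a^i$ turns into $\bigtriangledown_{a^i}M^{i+1:n}\,\bigtriangledown_{\theta_i}g(\theta_i,\varepsilon)$. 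Because autograd would otherwise also pick up the $\theta_i$-dependence hidden in the importance-sampling ratio that multiplies this factor, and that contribution is already accounted for by the ``Own Learning'' term, I wrap the ratio in $\textnormal{detach}(\cdot)$ so it contributes only in value and not to the derivative.

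Finally I would impose PPO clipping. Applying Schulman et al.'s clip-and-min trick to $\pi_{\theta_i}(a^i|s)/\pi_{\theta_i}^{old}(a^i|s)$ yields the pessimistic lower bound on the own-learning branch; the same clipped ratio is propagated symmetrically into the peer-learning branch, again wrapped in $\textnormal{detach}$, so its gradient-carrying factor remains $\bigtriangledown_{a^i}M^{i+1:n}\,g(\theta_i,\varepsilon)$. The step I expect to be the main obstacle is this last one: arguing that the $\min$-clip structure is preserved across the peer-learning term while keeping the two gradient paths (through $\theta_i$ and through $a^i = g(\theta_i,\varepsilon)$) distinct, so that the resulting objective simultaneously (i) reduces, in the unclipped regime, to the conditional policy gradient given by Theorem~\ref{conditional}, and (ii) inherits PPO's trust-region behavior on the $\theta_i$-path. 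The remaining bookkeeping --- writing the expectation as $\mathbb{E}_{s\sim\rho^{\boldsymbol\pi}(s),\,\textbf{a}\sim\boldsymbol\pi,\,\varepsilon\sim p(\varepsilon)}$ and replacing $Q_{\boldsymbol\pi}$ by the GAE estimator $\hat{A}(s,\textbf{a})$ --- yields exactly the stated objective.
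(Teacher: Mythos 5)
Your proposal is correct and follows essentially the same route the paper takes: the corollary is not proved separately in the paper but is constructed exactly as you describe, by identifying $\mathcal{B}_i=\{i+1,\dots,n\}$ under the fixed execution order, scaling the advantage by the HAPPO-style cumulative ratio $M^{i+1:n}$ of the already-updated agents, routing the peer-learning gradient of Theorem~\ref{conditional} through $a^i=g(\theta_i,\varepsilon)$ with $\textnormal{detach}(\cdot)$ preventing double counting on the $\theta_i$-path, and then imposing the standard PPO clip-and-min on the likelihood ratio. The step you flag as the main obstacle (carrying the min-clip structure over to the peer-learning branch) is indeed left as a heuristic design choice in the paper as well, so your treatment matches its level of justification.
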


We provide the pseudo-code for BPPO in Algorithm \ref{alg:bppo}.
\begin{algorithm}[h!]
\caption{Bidirectional Proximal Policy Optimisation}
\label{alg:bppo}
\textbf{Initialize}: The auto-regressive joint policy $\boldsymbol\pi=\{\pi_{\theta_{1}},\dots,\pi_{\theta_{n}}\}$, the global value function $V$, replay buffer B, and the execution order $\{1,\dots,n\}$.\\
\begin{algorithmic}[1] 
\FOR{episode $k = 0,1,\dots$}
      \STATE Collect a set of trajectories by sequentially executing policies according to the execution order;
      \STATE Push data into B.
      \STATE Compute the advantage approximation $\hat{A}(s,\textbf{a})$ with GAE.
      \STATE Compute the value target $v(s)=\hat{A}(s,\textbf{a})+V(s)$.
      \STATE Set agent $i$'s gradient w.r.t. agent $j$'s action $\{c_i^j=0 \,|\, i \in \mathcal{N}, j \in \mathcal{N} \}$ and $M^{n+1}=1$.
      \FOR{agent $i = n,n-1,\dots,1$}
        \STATE Generate $g(\theta_i, \varepsilon)$ based on the reparameterization trick.
        \STATE Compute $\bigtriangledown _{a^i}M^{i+1:n}$ based on $\{c_j^i=0 \,|\, j \in \{i+1,\dots,n\} \}$ and the chain rule.
        \STATE Optimize Eq. \ref{eq:bppo} w.r.t. $\theta_{i}$.
        \FOR{agent $j = 1,\dots,i-1$}
            \STATE Compute the gradient $c$ of $\frac{\pi_{\theta_i}(a^i|s)}{\pi_{\theta_i}^{old}(a^i|s)}$ w.r.t. $a^{j}$.
            \STATE Set $c_i^j=c$.
        \ENDFOR
        \STATE Compute $M^{i:n}=\frac{\pi_{\theta_i}(a^i|s)}{\pi_{\theta_i}^{old}(a^i|s)}M^{i+1:n}$.
      \ENDFOR
      \STATE Update the value function by the following formula:
      \STATE $V=\textnormal{argmin}_{V} \mathbb{E}_{s\sim\rho^{\boldsymbol\pi}(s)}\Bigl[ \left\| v(s)-V(s) \right\|^2 \Bigr]$.
\ENDFOR
\end{algorithmic}
\end{algorithm}

\begin{figure*}[htbp!]
\centering
\includegraphics[width=0.9\textwidth,height=0.58\textwidth]{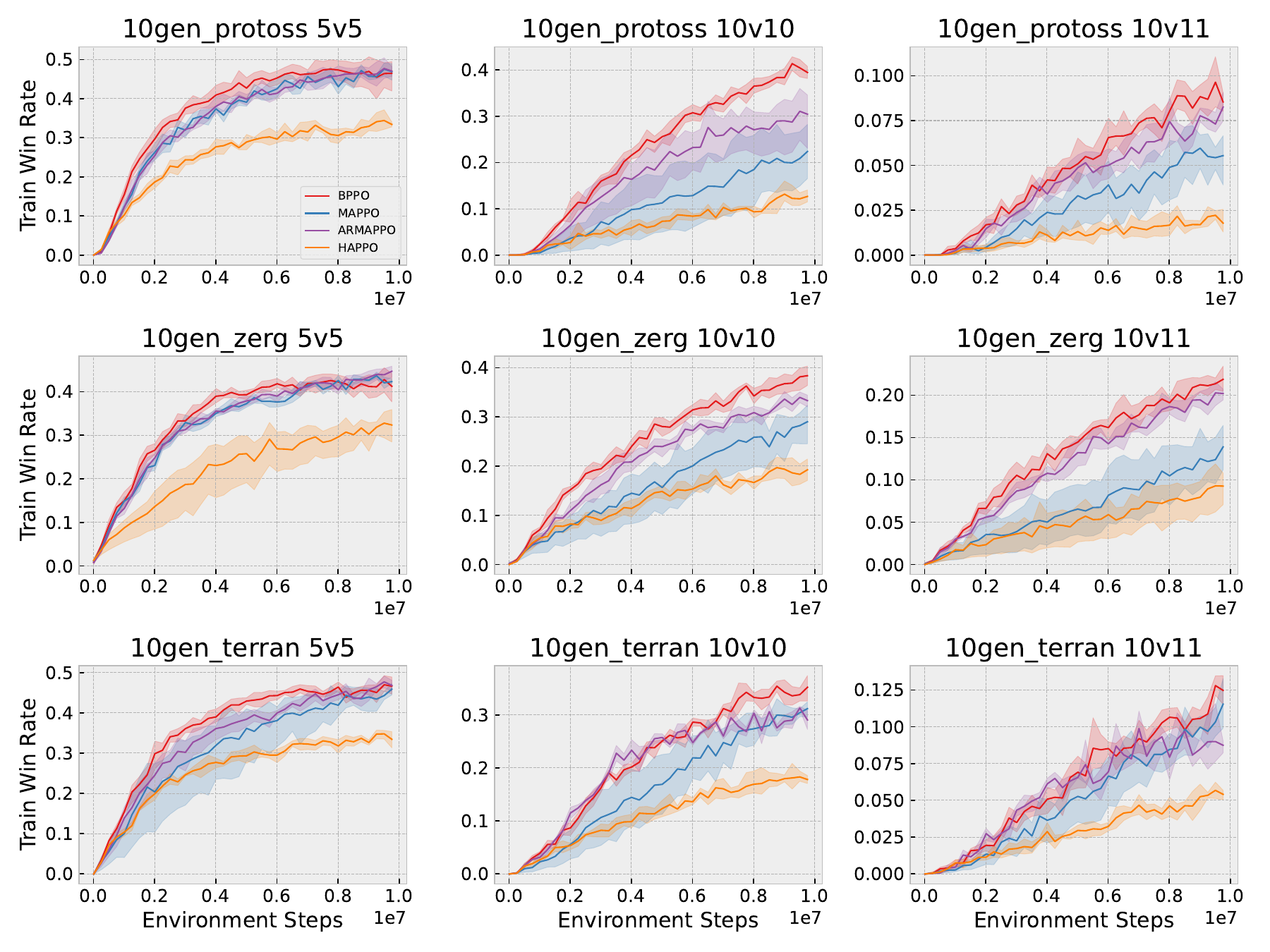} 
\caption{Comparison of training results on SMACv2.}
\label{smac-train}
\end{figure*}

\section{Experiments}
In this section, we experimentally evaluate BPPO on several multi-agent benchmarks, including two matrix games \cite{claus1998dynamics}, the StarCraft Multi-Agent Challenge Version 2 (SMACv2) \cite{ellis2022smacv2}, the Multi-agent MuJoCo (MA-MuJoCo) \cite{peng2021facmac}, and the Google Research Football (GRF) \cite{Kurach_Raichuk_Stańczyk_Zając_Bachem_Espeholt_Riquelme_Vincent_Michalski_Bousquet_Gelly_2020}, comparing them against MAPPO \cite{NEURIPS2022_9c1535a0}, HAPPO \cite{kuba2022trust}, and Auto-Regressive MAPPO (ARMAPPO) \cite{pmlr-v162-fu22d}. All results are presented using the mean and standard deviation of five random seeds. We fixed the execution order as sequential in all experiments. Additionally, we compared the effects of different execution orders in Appendix. More experimental details and results on these tasks are included in Appendix.

\begin{table}[ht]
    \centering
    \begin{tabular}{cc|ccc}
        \multicolumn{2}{c}{} & \multicolumn{3}{c}{Player 2} \\
        \multicolumn{2}{c}{} & A & B & C \\
        \cline{3-5}
        \parbox[t]{4mm}{\multirow{3}{*}{\rotatebox[origin=c]{90}{Player 1}}} & A &  \multicolumn{1}{c|}{11} & \multicolumn{1}{c|}{-30} & \multicolumn{1}{c|}{0} \\
        \cline{3-5}
        & B & \multicolumn{1}{c|}{-30} & \multicolumn{1}{c|}{7} & \multicolumn{1}{c|}{0} \\
        \cline{3-5}
        & C & \multicolumn{1}{c|}{0} & \multicolumn{1}{c|}{6} & \multicolumn{1}{c|}{5} \\
        \cline{3-5}
    \end{tabular}
    \caption{Payoff Matrix of the Climbing game.}
    \label{tab:climbing_payoff_matrix}
\end{table}
\begin{table}[ht]
    \centering
    \begin{tabular}{cc|ccc}
        \multicolumn{2}{c}{} & \multicolumn{3}{c}{Player 2} \\
        \multicolumn{2}{c}{} & A & B & C \\
        \cline{3-5}
        \parbox[t]{4mm}{\multirow{3}{*}{\rotatebox[origin=c]{90}{Player 1}}} & A &  \multicolumn{1}{c|}{-100} & \multicolumn{1}{c|}{0} & \multicolumn{1}{c|}{10} \\
        \cline{3-5}
        & B & \multicolumn{1}{c|}{0} & \multicolumn{1}{c|}{2} & \multicolumn{1}{c|}{0} \\
        \cline{3-5}
        & C & \multicolumn{1}{c|}{10} & \multicolumn{1}{c|}{0} & \multicolumn{1}{c|}{-100} \\
        \cline{3-5}
    \end{tabular}
    \caption{Payoff Matrix of the Penalty game.}
    \label{tab:penalty_payoff_matrix}
\end{table}

\begin{figure*}[htbp!]
\centering
\includegraphics[width=0.9\textwidth,height=0.6\linewidth]{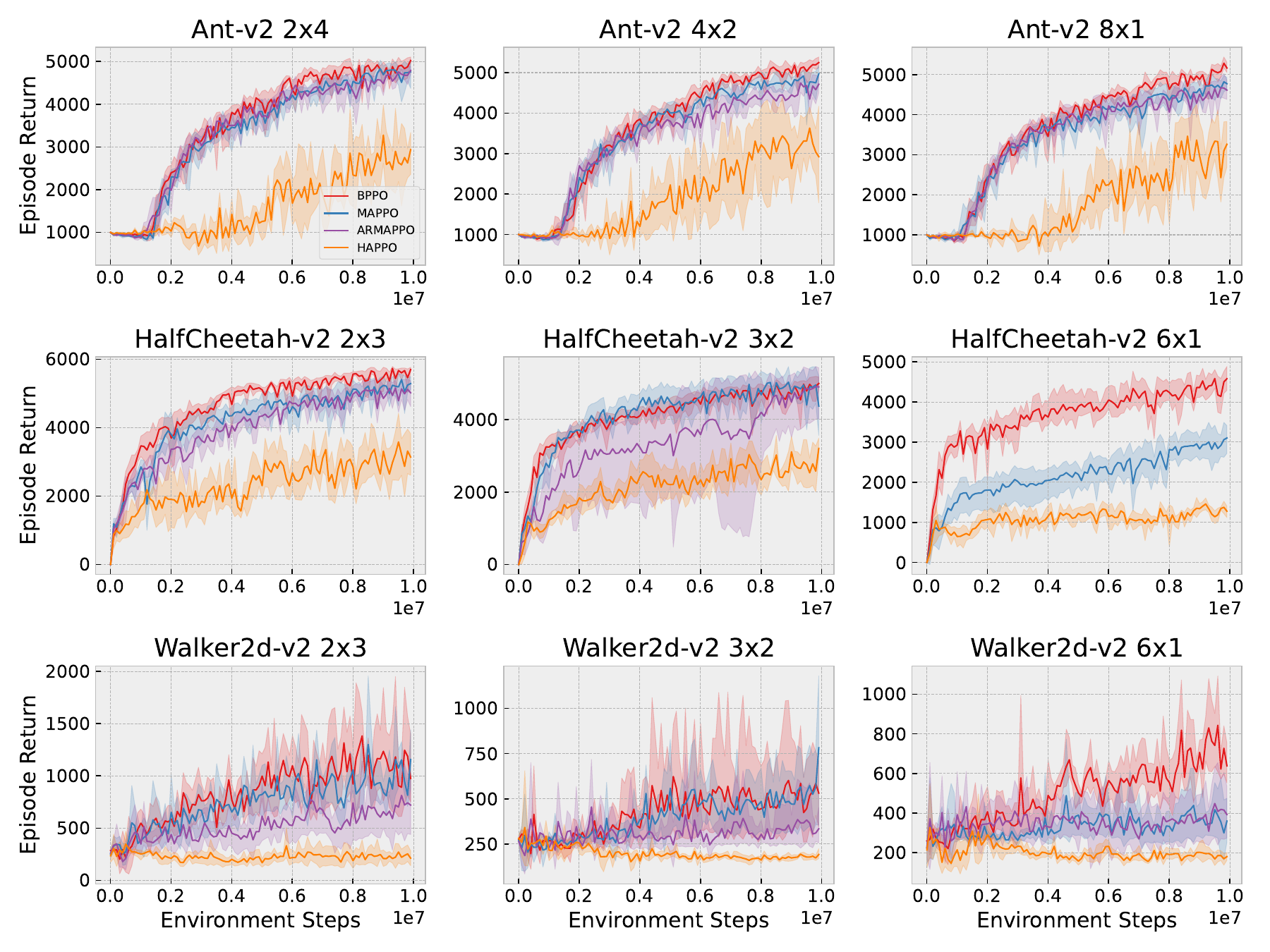} 
\caption{Performance comparison on multiple Multi-Agent MuJoCo tasks.}
\label{mujoco-main}
\end{figure*}

\subsection{Matrix Games}
\begin{figure}[htbp!]
\centering
\includegraphics[width=0.36\textwidth]{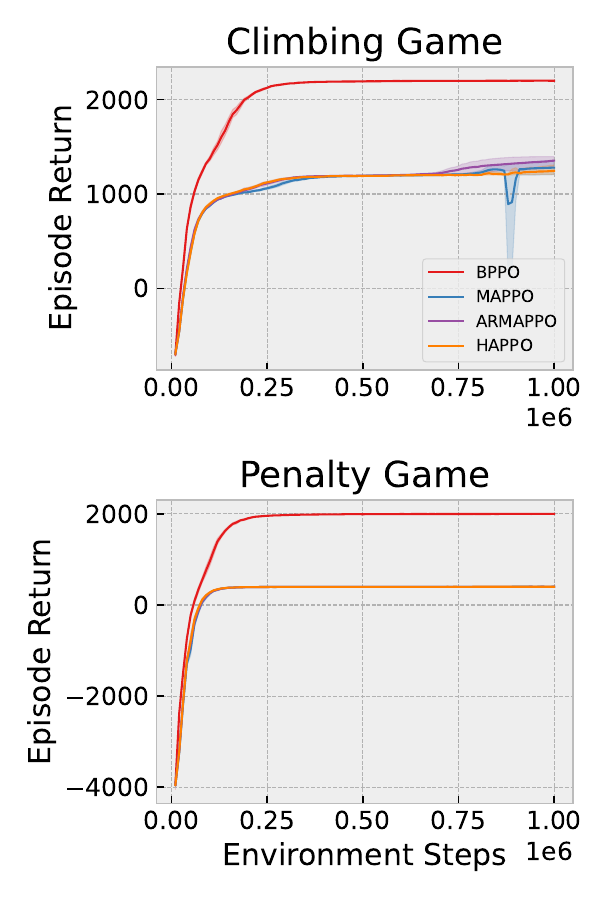} 
\caption{Performance comparison on the Climbing game and the Penalty game.}
\label{fig:matrix}
\end{figure}
As presented in Table \ref{tab:climbing_payoff_matrix} and \ref{tab:penalty_payoff_matrix}, the Climbing game and the Penalty game \cite{claus1998dynamics} are shared-reward multi-agent matrix games with two players in which each player has three actions at their disposal. The two matrix games have several Nash equilibria, but only one or two Pareto-optimal Nash equilibria \cite{christianos2022pareto}. Although stateless and with simple action space, the matrix games are difficult to solve as the agents need to coordinate among two optimal joint actions. Figure \ref{fig:matrix} shows that the compared baselines will converge to a locally optimal policy while BPPO is the only method that converges to the Pareto-optimal equilibria in all games. This is because BPPO explicitly considers the dependency success to find the optimal joint policy. The gap between the proposed method and the baselines is possibly due to that agents are fully independent of each other when making decisions in those methods. Interestingly, we observe that even with an auto-regressive policy, ARMAPPO still fails to find the optima. However, when we project the preceding actions inputted to each agent in ARMAPPO to higher-dimensional vectors, ARMAPPO w/ PROJ successfully converges to the optimal policy (verified in Appendix).

\subsection{SMACv2}

In SMAC, a group of learning agents aims to defeat the units of the enemy army controlled by the built-in heuristic AI. Despite its popularity, the SMAC is restricted to limited stochasticity \cite{ellis2022smacv2}. Compared to the SMAC, we instead evaluate our method on the more challenging SMAC-v2 benchmark which is designed with higher randomness. We evaluate our method on 3 maps (Zerg, Terran, and Protoss) with symmetric (20-vs-20) and asymmetric (10-vs-11 and 20-vs-23) units. As shown in Figure \ref{smac-train}, we generally observe that BPPO outperforms the baselines across most scenarios. In three 10 vs 10 scenarios, the margin between BPPO and the baselines becomes larger. Additionally, we also observe that BPPO has better stability as the variance shows.
\begin{figure*}[htbp!]
\centering
\includegraphics[width=0.99\textwidth]{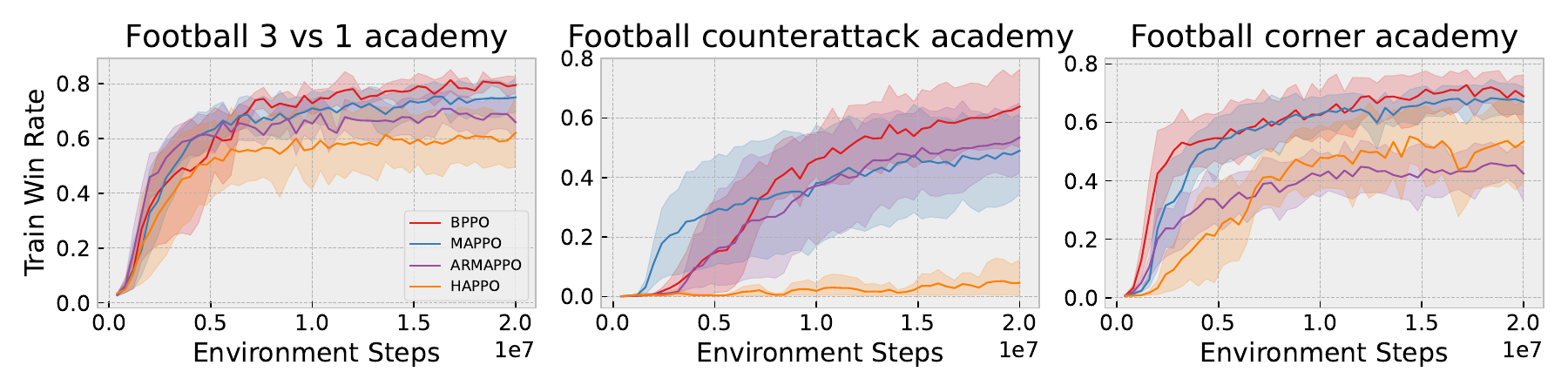} 
\caption{Averaged train win rate on the Google Research Football scenarios.}
\label{football-train}
\end{figure*}

\subsection{MA-MuJoCo}

Multi-Agent MuJoCo is a novel benchmark for decentralized cooperative continuous multi-agent robotic control in which single robots are decomposed into individual segments controlled by different agents. We show the performance comparison against the baselines in Figure \ref{mujoco-main}. We can see that BPPO achieves comparable performance compared to the baselines in most tasks while superiorly outperforming others in certain scenarios. It is also worth noting that the observed performance gap between BPPO and ARMAPPO can be attributed to the effectiveness of backward dependency. Meanwhile, we can observe that the performance gap between BPPO and its rivals enlarges with the increasing number of agents. Especially in HalfCheetah-v2 6x1 and Walker2d-v2 6x1, when other algorithms fail to learn any meaningful joint policies or converge to suboptimal points, BPPO outperforms the baselines by a large margin. Interestingly, especially in HalfCheetah 6x1 task, the performance of ARMAPPO even drops to negative. These results show that BPPO enables agents to achieve consistent joint improvement.  
\begin{figure}[htbp!]
\centering
\includegraphics[width=0.4\textwidth]{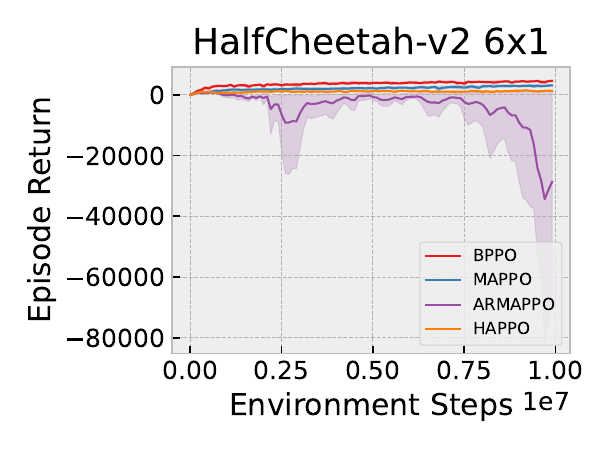} 
\caption{Performance comparison on the HalfCheetah-v2 6x1 Multi-Agent MuJoCo task. ARMAPPO performs poorly, with even negative rewards.}
\label{mujoco-half-ar}
\end{figure}

\subsection{GRF}

Google Research Football is a complex environment with large action space and sparse rewards where agents aim to score goals against fixed rule-based opponents. We evaluate our method in both GRF academy scenarios (3-vs-1 with keeper, corner, and counterattack hard) and full-game scenarios (5-vs-5). As can be seen in Figure \ref{football-train}, in the academy scenarios, only a minor difference can be observed between the proposed method and the baselines except for the counterattack task. Additionally, Figure \ref{football-5v5} shows that BPPO gains the highest score in the complex 5 vs 5 full-game scenario, while the baselines barely learn anything. Despite the negative performance of all algorithms, BPPO has achieved improved returns and is still learning compared to other algorithms that have consistently maintained their initial values. 
\begin{figure}[htbp!]
\centering
\includegraphics[width=0.4\textwidth]{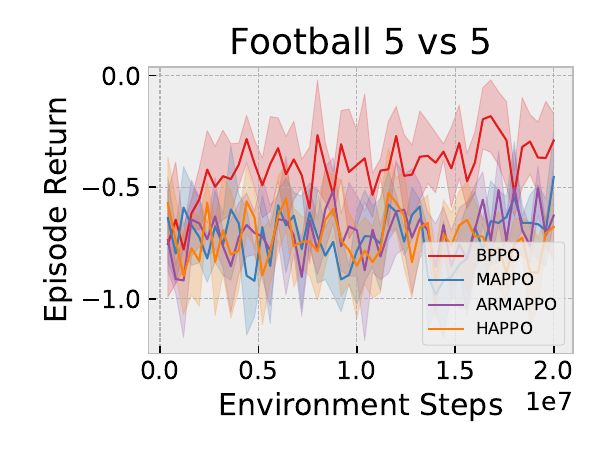} 
\caption{Comparisons of averaged return on the 5-vs-5 scenario.}
\label{football-5v5}
\end{figure}

\section{Conclusion}
In this paper, we propose Back-Propagation Through Agents (BPTA) to enable bidirectional dependency in any action-dependent multi-agent policy gradient (MAPG) methods. By conditional multi-agent stochastic policy gradient theorem, we can directly model both an agent’s own action effect and the feedback from its backward dependent agents. We evaluate the proposed Bidirectional Proximal Policy Optimisation (BPPO) based on BPTA and auto-regressive policy on several multiagent benchmarks. Results show that BPPO improves the performance against current state-of-the-art MARL methods. For future work, we plan to study the methods to learn the adaptive order.

\section*{Acknowledgments}
Zhiyuan Li acknowledges the financial support from the China Scholarship Council (CSC).

\appendix
\section{Derivation of Multi-Agent Conditional Policy Gradient Theorem}
\label{appendix:conditional}
\conditional*
\begin{proof}
The proof mainly follows the standard Stochastic Policy Gradient Theorem \cite{NIPS1999_464d828b}. We assume agent $i$ follows the policy $\pi_{\theta_i}(a^{i}|s, a^{\mathcal{F}_i})$ parameterized by $\theta_i$, where $\mathcal{F}_i$ are the agents on which agent $i$ has a forward dependency, and $a^{\mathcal{F}_i}$ are the actions taken by those agents drawn from the joint policy $\pi_{\theta_{\mathcal{F}_i}}$. We begin our derivation from the value function $V_{\boldsymbol\pi}(s)$ expand it with the state-action $Q_{\boldsymbol\pi}(s,\textbf{a})$ and the conditional joint policy (see Equation \ref{eq:conditional}):
\begin{equation}
\label{eq:expanded_V}
\begin{aligned}
V_{\boldsymbol\pi}(s) = &\int_{\mathcal{A}^1}\pi_{\theta_1}(a^1|s,a^{\mathcal{F}_1})\cdots \\ 
&\int_{\mathcal{A}^n}\pi_{\theta_n}(a^n|s,a^{\mathcal{F}_n})Q_{\boldsymbol\pi}(s,\textbf{a})d\textbf{a}
\end{aligned}
\end{equation}

In Equation \ref{eq:expanded_V}, we note that the individual policies of all agents except agent $i$ may be conditioned on the action of agent $i$, i.e. $i \in \mathcal{F}_j, \forall j \in \mathcal{N}\setminus \{i\}$. Importantly, the optimization of those agents that have forward dependencies on agent $i$ will result in backward dependencies between them and agent $i$ through the channel (the action of agent $i$). Thus, we denote $\mathcal{B}_i$ as the set of $m$ peer agents on which agent $i$ has a backward dependency and $\pi_{\theta_{\mathcal{B}_i}}$ as the joint policy of those agents.

Using the new notations, we continue from Equation \ref{eq:expanded_V} and derive the gradient of the value function with respect to $\theta_i$ by applying the product rule:
\begin{equation}
\begin{aligned}
\label{eq:derivitive_V}
& \bigtriangledown _{\theta _i}V_{\boldsymbol\pi}(s) 
\\
= & \bigtriangledown _{\theta _i}\int_{\mathcal{A}^1}\pi_{\theta_1}(a^1|s,a^{\mathcal{F}_1})\cdots 
\\
&\qquad \qquad \qquad  \int_{\mathcal{A}^n}\pi_{\theta_n}(a^n|s,a^{\mathcal{F}_n})Q_{\boldsymbol\pi}(s,\textbf{a})d\textbf{a} 
\\
= & \bigtriangledown _{\theta _i}\int_{\mathcal{A}^{\mathcal{F}_i}}\pi_{\theta_{\mathcal{F}_i}}(a^{\mathcal{F}_i}|s,a^{\mathcal{F}_{\mathcal{F}_i} })
\int_{\mathcal{A}^i}\pi_{\theta_i}(a^i|s,a^{\mathcal{F}_i}) 
\\
&\qquad \int_{\mathcal{A}^{\mathcal{B}_i}}\pi_{\theta_{\mathcal{B}_i}}(a^{\mathcal{B}_i}|s,a^i,a^{\mathcal{F}_{\mathcal{B}_i} \setminus \{i\}})Q_{\boldsymbol\pi}(s,\textbf{a})d\textbf{a}
\\
=& \int_{\mathcal{A}^{\mathcal{F}_i}}\pi_{\theta_{\mathcal{F}_i}}(a^{\mathcal{F}_i}|s,a^{\mathcal{F}_{\mathcal{F}_i} })\left [  \int_{\mathcal{A}^i}\bigtriangledown _{\theta _i} \pi_{\theta_i}(a^i|s,a^{\mathcal{F}_i})\right ] \\ 
&\underbrace{\qquad\int_{\mathcal{A}^{\mathcal{B}_i}}\pi_{\theta_{\mathcal{B}_i}}(a^{\mathcal{B}_i}|s,a^i,a^{\mathcal{F}_{\mathcal{B}_i} \setminus \{i\}})Q_{\boldsymbol\pi}(s,\textbf{a})d\textbf{a}}_{\text{Term A}} \; + \\
& \int_{\mathcal{A}^{\mathcal{F}_i}}\pi_{\theta_{\mathcal{F}_i}}(a^{\mathcal{F}_i}|s,a^{\mathcal{F}_{\mathcal{F}_i} })\int_{\mathcal{A}^i}\pi_{\theta_i}(a^i|s,a^{\mathcal{F}_i})\\ 
& \underbrace{\left [  \int_{\mathcal{A}^{\mathcal{B}_i}}\bigtriangledown _{\theta _i} \pi_{\theta_{\mathcal{B}_i}}(a^{\mathcal{B}_i}|s,a^i,a^{\mathcal{F}_{\mathcal{B}_i} \setminus \{i\}})\right ]Q_{\boldsymbol\pi}(s,\textbf{a})d\textbf{a}}_{\text{Term B}} \; + \\
& \int_{\mathcal{A}^{\mathcal{F}_i}}\pi_{\theta_{\mathcal{F}_i}}(a^{\mathcal{F}_i}|s,a^{\mathcal{F}_{\mathcal{F}_i} })\int_{\mathcal{A}^i}\pi_{\theta_i}(a^i|s,a^{\mathcal{F}_i}) \\
& \underbrace{\int_{\mathcal{A}^{\mathcal{B}_i}}\pi_{\theta_{\mathcal{B}_i}}(a^{\mathcal{B}_i}|s,a^i,a^{\mathcal{F}_{\mathcal{B}_i} \setminus \{i\}})\left [ \bigtriangledown _{\theta _i} Q_{\boldsymbol\pi}(s,\textbf{a})\right ]d\textbf{a}}_{\text{Term C}}.
\end{aligned}
\end{equation}

We first focus on the derivative of the joint policy $\pi_{\theta_{\mathcal{B}_i}}$ on which agent $i$ has a backward dependency in Term B:
\begin{equation}
\begin{aligned}
\label{eq:derivitive_bwd}
& \int_{\mathcal{A}^{\mathcal{B}_i}}\bigtriangledown _{\theta _i} \pi_{\theta_{\mathcal{B}_i}}(a^{\mathcal{B}_i}|s,a^i,a^{\mathcal{F}_{\mathcal{B}_i} \setminus \{i\}}) \\
=& \bigtriangledown _{\theta _i} \Bigl [  \int_{\mathcal{A}^{\mathcal{B}_i[1]}}\pi_{\theta_{\mathcal{B}_i[1]}}(a^{\mathcal{B}_i[1]}|s,a^i,a^{\mathcal{F}_{\mathcal{B}_i[1]} \setminus \{i\}})\cdots \\
&\qquad \int_{\mathcal{A}^{\mathcal{B}_i[m]}}\pi_{\theta_{\mathcal{B}_i[m]}}(a^{\mathcal{B}_i[m]}|s,a^i,a^{\mathcal{F}_{\mathcal{B}_i[m]} \setminus \{i\}})\Bigr ] \\
=& \left [  \int_{\mathcal{A}^{\mathcal{B}_i[1]}}\bigtriangledown _{\theta _i}\pi_{\theta_{\mathcal{B}_i[1]}}(a^{\mathcal{B}_i[1]}|s,a^i,a^{\mathcal{F}_{\mathcal{B}_i[1]} \setminus \{i\}})\right ] \\ 
&\quad \Pi_{\forall k \in \mathcal{B}_i \setminus \{1\} }\int_{\mathcal{A}^{\mathcal{B}_i[k]}}\pi_{\theta_{\mathcal{B}_i[k]}}(a^{\mathcal{B}_i[k]}|s,a^i,a^{\mathcal{F}_{\mathcal{B}_i[k]} \setminus \{i\}}) \; + \\
& \left [  \int_{\mathcal{A}^{\mathcal{B}_i[2]}}\bigtriangledown _{\theta _i}\pi_{\theta_{\mathcal{B}_i[2]}}(a^{\mathcal{B}_i[2]}|s,a^i,a^{\mathcal{F}_{\mathcal{B}_i[2]} \setminus \{i\}})\right ] \\ 
&\quad \Pi_{\forall k \in \mathcal{B}_i \setminus \{2\} }\int_{\mathcal{A}^{\mathcal{B}_i[k]}}\pi_{\theta_{\mathcal{B}_i[k]}}(a^{\mathcal{B}_i[k]}|s,a^i,a^{\mathcal{F}_{\mathcal{B}_i[k]} \setminus \{i\}}) \; + \\
&\qquad\qquad\qquad\qquad\qquad\cdots  \; + \\
& \left [  \int_{\mathcal{A}^{\mathcal{B}_i[m]}}\bigtriangledown _{\theta _i}\pi_{\theta_{\mathcal{B}_i[m]}}(a^{\mathcal{B}_i[m]}|s,a^i,a^{\mathcal{F}_{\mathcal{B}_i[m]} \setminus \{i\}})\right ] \\ 
&\quad \Pi_{\forall k \in \mathcal{B}_i \setminus \{m\} }\int_{\mathcal{A}^{\mathcal{B}_i[k]}}\pi_{\theta_{\mathcal{B}_i[k]}}(a^{\mathcal{B}_i[k]}|s,a^i,a^{\mathcal{F}_{\mathcal{B}_i[k]} \setminus \{i\}})
\end{aligned}
\end{equation}

To back-propagate gradient through actions, we use the reparameterization trick for continuous actions and the Gumbel-softmax trick for discrete actions, i.e., $a^i = g(\theta_i, \varepsilon)$, where $\varepsilon \sim p(\varepsilon)$. Then, Equation \ref{eq:derivitive_bwd} can be further expressed as:
\begin{equation}
\begin{aligned}
\label{eq:derivitive_bwd2}
&  \int_{\mathcal{A}^{\mathcal{B}_i[1]}}\bigtriangledown _{a^i}\pi_{\theta_{\mathcal{B}_i[1]}}(a^{\mathcal{B}_i[1]}|s,a^i,a^{\mathcal{F}_{\mathcal{B}_i[1]} \setminus \{i\}})\bigtriangledown _{\theta_i}g(\theta_i, \varepsilon) \\ 
&\quad \Pi_{\forall k \in \mathcal{B}_i \setminus \{1\} }\int_{\mathcal{A}^{\mathcal{B}_i[k]}}\pi_{\theta_{\mathcal{B}_i[k]}}(a^{\mathcal{B}_i[k]}|s,a^i,a^{\mathcal{F}_{\mathcal{B}_i[k]} \setminus \{i\}}) \; + \\
& \int_{\mathcal{A}^{\mathcal{B}_i[2]}}\bigtriangledown _{a^i}\pi_{\theta_{\mathcal{B}_i[2]}}(a^{\mathcal{B}_i[2]}|s,a^i,a^{\mathcal{F}_{\mathcal{B}_i[2]} \setminus \{i\}})\bigtriangledown _{\theta_i}g(\theta_i, \varepsilon) \\ 
&\quad \Pi_{\forall k \in \mathcal{B}_i \setminus \{2\} }\int_{\mathcal{A}^{\mathcal{B}_i[k]}}\pi_{\theta_{\mathcal{B}_i}}(a^{\mathcal{B}_i[k]}|s,a^i,a^{\mathcal{F}_{\mathcal{B}_i[k]} \setminus \{i\}}) \; + \\
&\qquad\qquad\qquad\qquad\qquad\cdots  \; + \\
&  \int_{\mathcal{A}^{\mathcal{B}_i}}\bigtriangledown _{a^i}\pi_{\theta_{\mathcal{B}_i[m]}}(a^{\mathcal{B}_i[m]}|s,a^i,a^{\mathcal{F}_{\mathcal{B}_i[m]} \setminus \{i\}})\bigtriangledown _{\theta_i}g(\theta_i, \varepsilon) \\ 
&\quad \Pi_{\forall k \in \mathcal{B}_i \setminus \{m\} }\int_{\mathcal{A}^{\mathcal{B}_i[k]}}\pi_{\theta_{\mathcal{B}_i[k]}}(a^{\mathcal{B}_i[k]}|s,a^i,a^{\mathcal{F}_{\mathcal{B}_i[k]} \setminus \{i\}}) \\
=& \int_{\mathcal{A}^{\mathcal{B}_i}}\bigtriangledown _{a^i} \pi_{\theta_{\mathcal{B}_i}}(a^{\mathcal{B}_i}|s,a^i,a^{\mathcal{F}_{\mathcal{B}_i} \setminus \{i\}})\bigtriangledown _{\theta_i}g(\theta_i, \varepsilon)
\end{aligned}
\end{equation}

Coming back to Term A, C in Equation \ref{eq:derivitive_V}, repeatedly unrolling the derivative of the Q-function by following \citeauthor{sutton2018reinforcement} yields:
\begin{equation}
\begin{aligned}
\label{eq:derivitive_i}
& \int_{\mathcal{S}}\sum_{t=0}^{\infty}\gamma ^{t}\mathcal{P}(s\to s^{\prime},t,\boldsymbol\pi)\int_{\mathcal{A}^i}\bigtriangledown _{\theta _i} \pi_{\theta_i}(a^i|s^{\prime},a^{\mathcal{F}_i}) \\
&\qquad \int_{\mathcal{A}^{-i}}\pi_{\theta_{-i}}(a^{-i}|s^{\prime},a^{\mathcal{F}_{-i}})Q_{\boldsymbol\pi}(s,\textbf{a})d\textbf{a}^{-i}da^ids^{\prime}
\end{aligned}
\end{equation}

Summarize Equations \ref{eq:derivitive_bwd2} and \ref{eq:derivitive_i} together:
\begin{equation}
\begin{aligned}
\label{eq:derivitive_V2}
& \bigtriangledown _{\theta _i}V_{\boldsymbol\pi}(s) \\
=& \int_{\mathcal{S}}\sum_{t=0}^{\infty}\gamma ^{t}\mathcal{P}(s\to s^{\prime},t,\boldsymbol\pi)\Bigl [ \int_{\mathcal{A}^i}\bigtriangledown _{\theta _i} \pi_{\theta_i}(a^i|s^{\prime},a^{\mathcal{F}_i}) \\
&\qquad \int_{\mathcal{A}^{-i}}\pi_{\theta_{-i}}(a^{-i}|s^{\prime},a^{\mathcal{F}_{-i}}) + \int_{\mathcal{A}^i}\pi_{\theta_i}(a^i|s,a^{\mathcal{F}_i}) \\
&\qquad \int_{\mathcal{A}^{\mathcal{F}_i}}\pi_{\theta_{\mathcal{F}_i}}(a^{\mathcal{F}_i}|s,a^{\mathcal{F}_{\mathcal{F}_i} }) \\ &\qquad \int_{\mathcal{A}^{\mathcal{B}_i}}\bigtriangledown _{a^i} \pi_{\theta_{\mathcal{B}_i}}(a^{\mathcal{B}_i}|s,a^i,a^{\mathcal{F}_{\mathcal{B}_i} \setminus \{i\}})\bigtriangledown _{\theta_i}g(\theta_i, \varepsilon)\Bigr ] \\ 
&\qquad Q_{\boldsymbol\pi}(s,\textbf{a})d\textbf{a}^{-i}da^ids^{\prime}
\end{aligned}
\end{equation}

Finally, we take the expectation over the possible initial states:
\begin{equation}
\begin{aligned}
\label{eq:derivative_macpgt}
\bigtriangledown _{\theta _i}\mathcal{J}(\boldsymbol\theta) &=\int _\mathcal{S}\rho^{\boldsymbol\pi}(s)\Bigl[ \int _{\mathcal{A}^{i}}\underbrace{\bigtriangledown _{\theta _i}\pi_{\theta_i}(a^i|s,a^{\mathcal{F}_i})}_{\textnormal{Own Learning}} \\ 
&\int_{\mathcal{A}^{-i}}\pi_{\theta_{-i}}(a^{-i}|s^{\prime},a^{\mathcal{F}_{-i}}) \; + \; \int_{\mathcal{A}^i}\pi_{\theta_i}(a^i|s,a^{\mathcal{F}_i}) \\ 
&\int_{\mathcal{A}^{\mathcal{F}_i}}\pi_{\theta_{\mathcal{F}_i}}(a^{\mathcal{F}_i}|s,a^{\mathcal{F}_{\mathcal{F}_i} }) \\
&\int_{\mathcal{A}^{\mathcal{B}_i}}\underbrace{\bigtriangledown _{a^i} \pi_{\theta_{\mathcal{B}_i}}(a^{\mathcal{B}_i}|s,a^i,a^{\mathcal{F}_{\mathcal{B}_i} \setminus \{i\}})\bigtriangledown _{\theta_i}g(\theta_i, \varepsilon)}_{\textnormal{Peer Learning}}\Bigr] \\ 
&Q_{\boldsymbol\pi}(s,\textbf{a})d\textbf{a}^{-i}da^ids 
\end{aligned}
\end{equation}
\end{proof}

\section{Experimental Details}
For a fair comparison, the implementation of BPPO and the baselines are based on the implementation of MAPPO. We keep all hyper-parameters unchanged at the origin best-performing status. The proposed method and compared baselines are implemented into parameter independent version. The common and different hyper-parameters used for MAPPO, HAPPO, ARMAPPO, and BPPO across all domains are listed in Table \ref{tab:common-hyper}-\ref{tab:common-hyper GRF} respectively.
\begin{table}[ht]
    \centering
    \begin{tabular}{c|c}
        \hline
        Parameter & Value \\
        \hline
        \multicolumn{2}{c}{Training} \\
        \hline
        optimizer & Adam \\
        optimizer epsilon & 1e-5 \\
        weight decay & 0 \\
        max grad norm & 10 \\
        data chunk length & 1 \\
        \hline
        \multicolumn{2}{c}{Model} \\
        \hline
        activation & ReLU \\
        \hline
        \multicolumn{2}{c}{PPO} \\
        \hline
        ppo-clip & 0.2 \\
        gamma & 0.99 \\
        gae lamda & 0.95 \\
        \hline
    \end{tabular}
    \caption{Common hyper-parameters used across all domains.}
    \label{tab:common-hyper}
\end{table}

\subsection{Matrix Games}
We list the hyper-parameters used in matrix games in Table.
\begin{table}[ht]
    \centering
    \begin{tabular}{c|c}
        \hline
        Parameter & Value \\
        \hline
        \multicolumn{2}{c}{Training} \\
        \hline
        actor lr & 5e-4 \\
        critic lr & 5e-4 \\
        entropy coef & 0.01 \\
        \hline
        \multicolumn{2}{c}{Model} \\
        \hline
        hidden layer & 1 \\
        hidden layer dim & 64 \\
        \hline
        \multicolumn{2}{c}{PPO} \\
        \hline
        ppo epoch & 15 \\
        ppo-clip & 0.2 \\
        num mini-batch & 1 \\
        \hline
        \multicolumn{2}{c}{Sample} \\
        \hline
        environment steps & 1000000 \\
        rollout threads & 50 \\
        episode length & 200 \\
        \hline
    \end{tabular}
    \caption{Common hyper-parameters used in matrix games.}
    \label{tab:common-hyper matrix}
\end{table}

\subsection{SMACv2}
We list the hyper-parameters used for each map of SMACv2 in Table.
\begin{table}[ht]
    \centering
    \begin{tabular}{c|c}
        \hline
        Parameter & Value \\
        \hline
        \multicolumn{2}{c}{Training} \\
        \hline
        actor lr & 5e-4 \\
        critic lr & 5e-4 \\
        entropy coef & 0.01 \\
        \hline
        \multicolumn{2}{c}{Model} \\
        \hline
        hidden layer & 1 \\
        hidden layer dim & 64 \\
        \hline
        \multicolumn{2}{c}{PPO} \\
        \hline
        ppo epoch & 5 \\
        ppo-clip & 0.2 \\
        num mini-batch & 1 \\
        \hline
        \multicolumn{2}{c}{Sample} \\
        \hline
        environment steps & 10000000 \\
        rollout threads & 50 \\
        episode length & 200 \\
        \hline
    \end{tabular}
    \caption{Common hyper-parameters used in the SMACv2.}
    \label{tab:common-hyper smacv2}
\end{table}

\subsection{MA-MuJoCo}
The hyper-parameters used for each task of MA-MuJoCo are listed in Table.
\begin{table}[ht]
    \centering
    \begin{tabular}{c|c}
        \hline
        Parameter & Value \\
        \hline
        \multicolumn{2}{c}{Training} \\
        \hline
        actor lr & 3e-4 \\
        critic lr & 3e-4 \\
        entropy coef & 0 \\
        \hline
        \multicolumn{2}{c}{Model} \\
        \hline
        hidden layer & 2 \\
        hidden layer dim & 64 \\
        \hline
        \multicolumn{2}{c}{PPO} \\
        \hline
        ppo epoch & 5 \\
        ppo-clip & 0.2 \\
        num mini-batch & 1 \\
        \hline
        \multicolumn{2}{c}{Sample} \\
        \hline
        environment steps & 10000000 \\
        rollout threads & 40 \\
        episode length & 100 \\
        \hline
        \multicolumn{2}{c}{Environment} \\
        \hline
        agent obsk & 0 \\
        \hline
    \end{tabular}
    \caption{Common hyper-parameters used in the MA-MuJoCo.}
    \label{tab:common-hyper MA-MuJoCo}
\end{table}
\begin{table}[ht]
    \centering
    \begin{tabular}{c|c}
        \hline
        Parameter & Value \\
        \hline
        \multicolumn{2}{c}{Environment} \\
        \hline
        agent obsk & None \\
        \hline
    \end{tabular}
    \caption{Different hyper-parameters used in the Ant task.}
    \label{tab:diff-hyper MA-MuJoCo}
\end{table}

\subsection{GRF}
We list the hyper-parameters used for each scenario of GRF in Table.
\begin{table}[ht]
    \centering
    \begin{tabular}{c|c}
        \hline
        Parameter & Value \\
        \hline
        \multicolumn{2}{c}{Training} \\
        \hline
        actor lr & 5e-4 \\
        critic lr & 5e-4 \\
        entropy coef & 0.01 \\
        \hline
        \multicolumn{2}{c}{Model} \\
        \hline
        hidden layer & 1 \\
        hidden layer dim & 64 \\
        \hline
        \multicolumn{2}{c}{PPO} \\
        \hline
        ppo epoch & 15 \\
        ppo-clip & 0.2 \\
        num mini-batch & 2 \\
        \hline
        \multicolumn{2}{c}{Sample} \\
        \hline
        environment steps & 20000000 \\
        rollout threads & 50 \\
        episode length & 200 \\
        \hline
    \end{tabular}
    \caption{Common hyper-parameters used in the GRF.}
    \label{tab:common-hyper GRF}
\end{table}

\section{Additional Results}
\subsection{Multi-agent MuJoCo}
To further verify the effectiveness of our method, we add a comparison with a coordinated graph method GCS \cite{10.5555/3535850.3535976}. The experimental result is shown in Figure \ref{mujoco-half-ar-app}.
\begin{figure}[htbp!]
\centering
\includegraphics[width=0.35\textwidth]{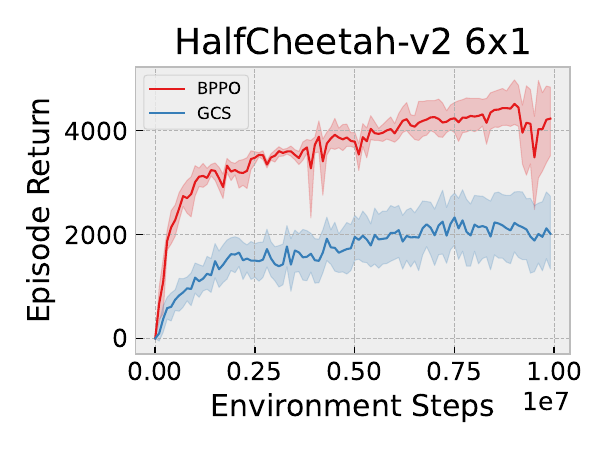} 
\caption{Performance comparison on the HalfCheetah-v2 6x1 Multi-Agent MuJoCo task.}
\label{mujoco-half-ar-app}
\end{figure}

\subsection{SMACv2}
Here, we evaluate the test win rate of BPPO and the baseline on selected SMACv2 maps. Results are shown in Figure \ref{smac-main}. In general, BPPO matches or slightly outperforms the best performance of the baselines on all nine maps.
\begin{figure*}[htbp!]
\centering
\includegraphics[width=0.9\textwidth]{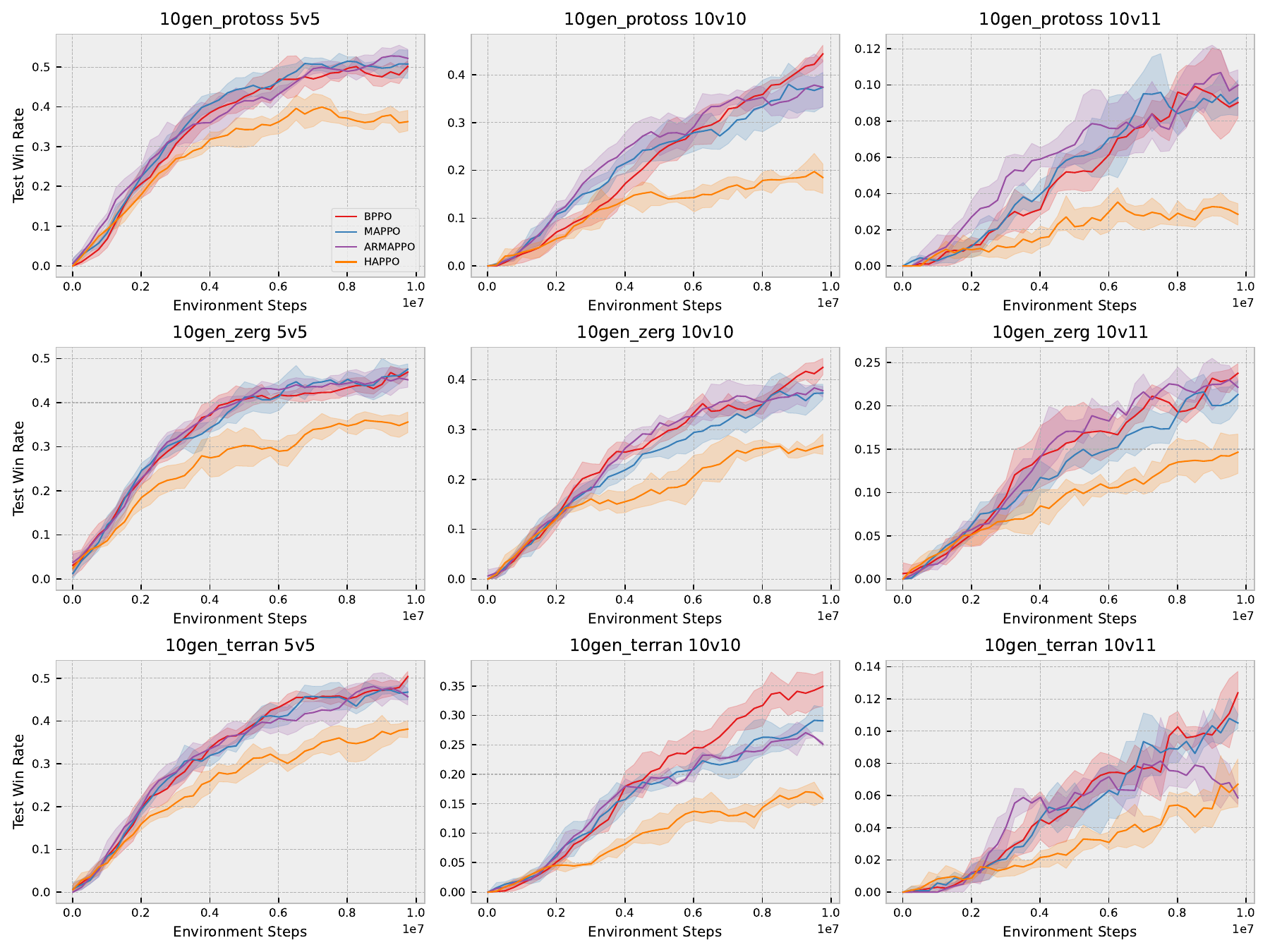} 
\caption{Comparison of the mean test win rate on SMACv2.}
\label{smac-main}
\end{figure*}
\subsection{GRF}
We further compare the evaluation performance of BPPO and the baselines in Figure \ref{football-main}. In all scenarios, BPPO can achieve comparable or better performances. 
\begin{figure*}[htbp!]
\centering
\includegraphics[width=0.9\textwidth]{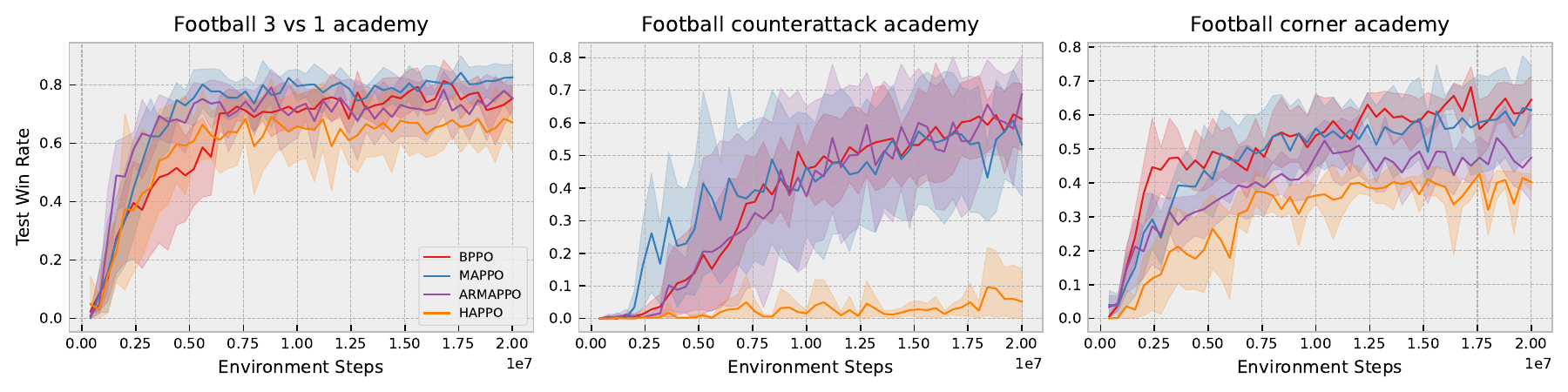} 
\caption{Averaged test win rate on the Google Research Football scenarios.}
\label{football-main}
\end{figure*}

\subsection{Ablation Study}
\subsubsection{Training Scheme}
Figure \ref{fig:mujoco-ablation} shows the effect of different training schemes on BPPO: 1) the simultaneous update scheme and 2) the sequential update scheme. We observe that the sequential and simultaneous update schemes achieve similar performance.
\begin{figure*}[htbp!]
\centering
\includegraphics[width=0.8\textwidth]{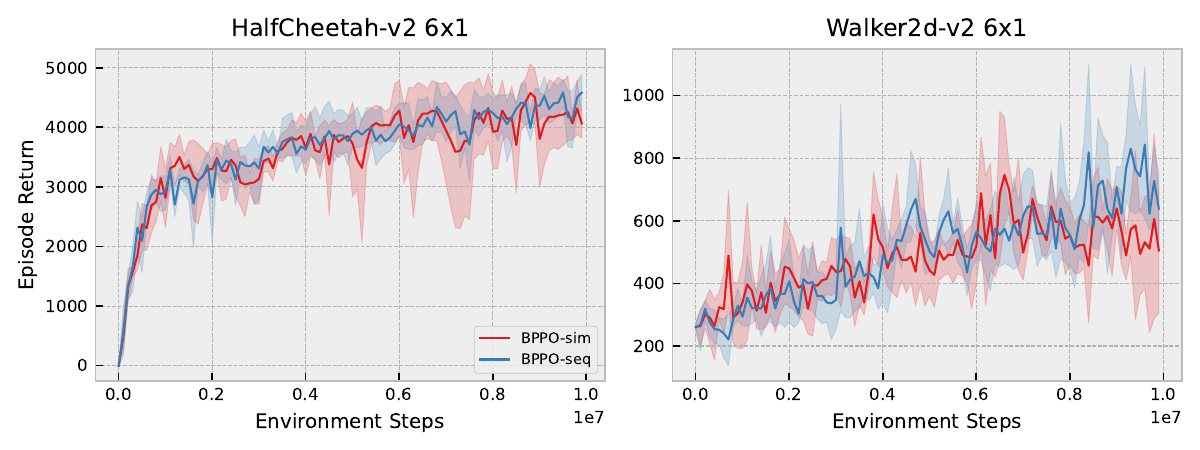} 
\caption{Ablation experiments on training schemes.}
\label{fig:mujoco-ablation}
\end{figure*}

\subsubsection{Execution Order}
To assess the impact of execution order, we opted for three different execution orders: sequential execution, reverse execution, and random execution. Figure  \ref{mujoco-half-ar2-app} shows the result:
\begin{figure}[htbp!]
\centering
\includegraphics[width=0.35\textwidth]{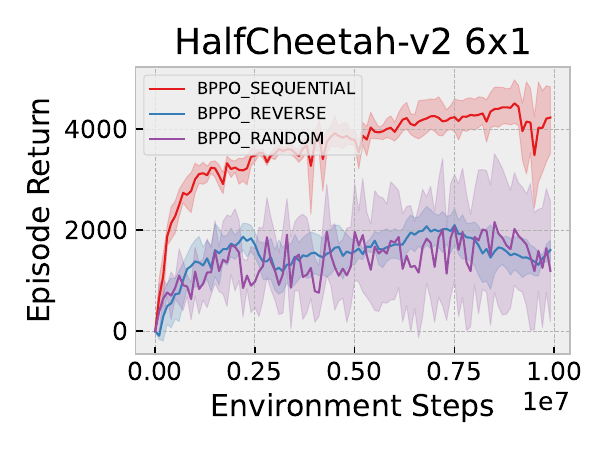} 
\caption{The effect of execution order on BPPO in HalfCheetah-v2 6x1 Multi-Agent MuJoCo task.}
\label{mujoco-half-ar2-app}
\end{figure}

\subsubsection{High-dimensional Projection}
Figure \ref{fig:matrix-ablation} shows the effect of high-dimensional projection on ARMAPPO.
\begin{figure}[htbp!]
\centering
\includegraphics[width=0.35\textwidth]{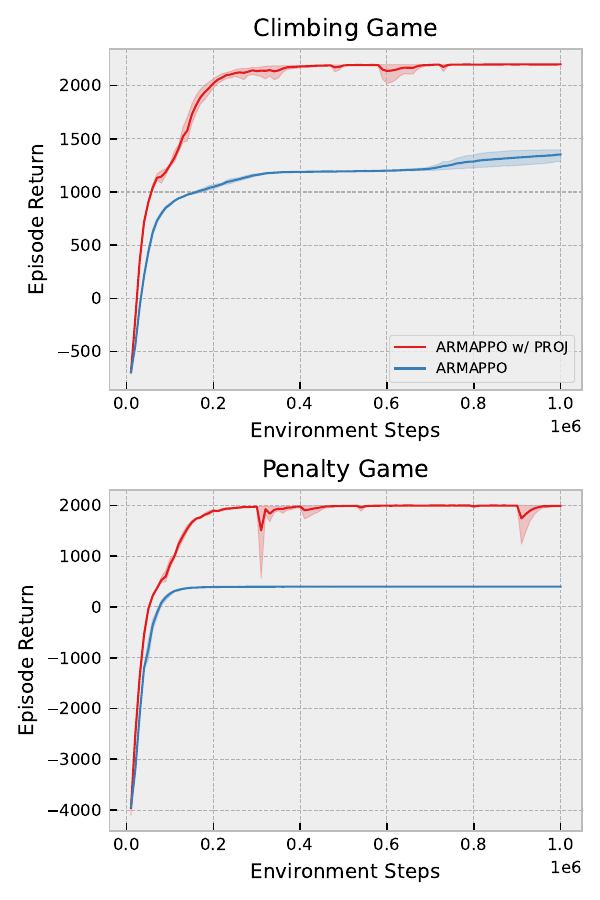} 
\caption{Ablation study on projection in matrix games.}
\label{fig:matrix-ablation}
\end{figure}

\bibliography{aaai24}

\begin{thebibliography}{45}
\providecommand{\natexlab}[1]{#1}

\bibitem[{Bertsekas(2021)}]{9317713}
Bertsekas, D. 2021.
\newblock Multiagent Reinforcement Learning: Rollout and Policy Iteration.
\newblock \emph{IEEE/CAA Journal of Automatica Sinica}, 8(2): 249--272.

\bibitem[{Box et~al.(2015)Box, Jenkins, Reinsel, and Ljung}]{box2015time}
Box, G.; Jenkins, G.; Reinsel, G.; and Ljung, G. 2015.
\newblock \emph{Time Series Analysis: Forecasting and Control}.
\newblock Wiley Series in Probability and Statistics. Wiley.
\newblock ISBN 9781118674925.

\bibitem[{Cho et~al.(2014)Cho, van Merrienboer, G{\"{u}}l{\c{c}}ehre, Bahdanau,
  Bougares, Schwenk, and Bengio}]{DBLP:conf/emnlp/ChoMGBBSB14}
Cho, K.; van Merrienboer, B.; G{\"{u}}l{\c{c}}ehre, {\c{C}}.; Bahdanau, D.;
  Bougares, F.; Schwenk, H.; and Bengio, Y. 2014.
\newblock Learning Phrase Representations using {RNN} Encoder-Decoder for
  Statistical Machine Translation.
\newblock In Moschitti, A.; Pang, B.; and Daelemans, W., eds.,
  \emph{Proceedings of the 2014 Conference on Empirical Methods in Natural
  Language Processing, {EMNLP} 2014, October 25-29, 2014, Doha, Qatar, {A}
  meeting of SIGDAT, a Special Interest Group of the {ACL}}, 1724--1734. {ACL}.

\bibitem[{Christianos, Papoudakis, and Albrecht(2022)}]{christianos2022pareto}
Christianos, F.; Papoudakis, G.; and Albrecht, S.~V. 2022.
\newblock Pareto Actor-Critic for Equilibrium Selection in Multi-Agent
  Reinforcement Learning.
\newblock arXiv:2209.14344.

\bibitem[{Claus and Boutilier(1998{\natexlab{a}})}]{10.5555/295240.295800}
Claus, C.; and Boutilier, C. 1998{\natexlab{a}}.
\newblock The Dynamics of Reinforcement Learning in Cooperative Multiagent
  Systems.
\newblock In \emph{Proceedings of the Fifteenth National/Tenth Conference on
  Artificial Intelligence/Innovative Applications of Artificial Intelligence},
  AAAI '98/IAAI '98, 746–752. USA: American Association for Artificial
  Intelligence.
\newblock ISBN 0262510987.

\bibitem[{Claus and Boutilier(1998{\natexlab{b}})}]{claus1998dynamics}
Claus, C.; and Boutilier, C. 1998{\natexlab{b}}.
\newblock The dynamics of reinforcement learning in cooperative multiagent
  systems.
\newblock \emph{AAAI/IAAI}, 1998(746-752): 2.

\bibitem[{Ellis et~al.(2022)Ellis, Moalla, Samvelyan, Sun, Mahajan, Foerster,
  and Whiteson}]{ellis2022smacv2}
Ellis, B.; Moalla, S.; Samvelyan, M.; Sun, M.; Mahajan, A.; Foerster, J.~N.;
  and Whiteson, S. 2022.
\newblock SMACv2: An Improved Benchmark for Cooperative Multi-Agent
  Reinforcement Learning.
\newblock arXiv:2212.07489.

\bibitem[{Foerster et~al.(2018)Foerster, Farquhar, Afouras, Nardelli, and
  Whiteson}]{10.5555/3504035.3504398}
Foerster, J.~N.; Farquhar, G.; Afouras, T.; Nardelli, N.; and Whiteson, S.
  2018.
\newblock Counterfactual Multi-Agent Policy Gradients.
\newblock In \emph{Proceedings of the Thirty-Second AAAI Conference on
  Artificial Intelligence and Thirtieth Innovative Applications of Artificial
  Intelligence Conference and Eighth AAAI Symposium on Educational Advances in
  Artificial Intelligence}, AAAI'18/IAAI'18/EAAI'18. AAAI Press.
\newblock ISBN 978-1-57735-800-8.

\bibitem[{Fu et~al.(2022)Fu, Yu, Xu, Yang, and Wu}]{pmlr-v162-fu22d}
Fu, W.; Yu, C.; Xu, Z.; Yang, J.; and Wu, Y. 2022.
\newblock Revisiting Some Common Practices in Cooperative Multi-Agent
  Reinforcement Learning.
\newblock In Chaudhuri, K.; Jegelka, S.; Song, L.; Szepesvari, C.; Niu, G.; and
  Sabato, S., eds., \emph{Proceedings of the 39th International Conference on
  Machine Learning}, volume 162 of \emph{Proceedings of Machine Learning
  Research}, 6863--6877. PMLR.

\bibitem[{Haarnoja et~al.(2019)Haarnoja, Zhou, Hartikainen, Tucker, Ha, Tan,
  Kumar, Zhu, Gupta, Abbeel, and Levine}]{haarnoja2019soft}
Haarnoja, T.; Zhou, A.; Hartikainen, K.; Tucker, G.; Ha, S.; Tan, J.; Kumar,
  V.; Zhu, H.; Gupta, A.; Abbeel, P.; and Levine, S. 2019.
\newblock Soft Actor-Critic Algorithms and Applications.
\newblock arXiv:1812.05905.

\bibitem[{He et~al.(2015)He, Zhang, Ren, and Sun}]{he2015deep}
He, K.; Zhang, X.; Ren, S.; and Sun, J. 2015.
\newblock Deep Residual Learning for Image Recognition.
\newblock arXiv:1512.03385.

\bibitem[{Kingma, Salimans, and Welling(2015)}]{NIPS2015_bc731692}
Kingma, D.~P.; Salimans, T.; and Welling, M. 2015.
\newblock Variational Dropout and the Local Reparameterization Trick.
\newblock In Cortes, C.; Lawrence, N.; Lee, D.; Sugiyama, M.; and Garnett, R.,
  eds., \emph{Advances in Neural Information Processing Systems}, volume~28.
  Curran Associates, Inc.

\bibitem[{Kingma and Welling(2022)}]{kingma2022autoencoding}
Kingma, D.~P.; and Welling, M. 2022.
\newblock Auto-Encoding Variational Bayes.
\newblock arXiv:1312.6114.

\bibitem[{Kuba et~al.(2022)Kuba, Chen, Wen, Wen, Sun, Wang, and
  Yang}]{kuba2022trust}
Kuba, J.~G.; Chen, R.; Wen, M.; Wen, Y.; Sun, F.; Wang, J.; and Yang, Y. 2022.
\newblock Trust Region Policy Optimisation in Multi-Agent Reinforcement
  Learning.
\newblock arXiv:2109.11251.

\bibitem[{Kurach et~al.(2020)Kurach, Raichuk, Stańczyk, Zajac, Bachem,
  Espeholt, Riquelme, Vincent, Michalski, Bousquet, and
  Gelly}]{Kurach_Raichuk_Stańczyk_Zając_Bachem_Espeholt_Riquelme_Vincent_Michalski_Bousquet_Gelly_2020}
Kurach, K.; Raichuk, A.; Stańczyk, P.; Zajac, M.; Bachem, O.; Espeholt, L.;
  Riquelme, C.; Vincent, D.; Michalski, M.; Bousquet, O.; and Gelly, S. 2020.
\newblock Google Research Football: A Novel Reinforcement Learning Environment.
\newblock \emph{Proceedings of the AAAI Conference on Artificial Intelligence},
  34(04): 4501--4510.

\bibitem[{Li et~al.(2023)Li, Liu, Zhang, Wei, Niu, Yang, Liu, and
  Ouyang}]{li2023ace}
Li, C.; Liu, J.; Zhang, Y.; Wei, Y.; Niu, Y.; Yang, Y.; Liu, Y.; and Ouyang, W.
  2023.
\newblock ACE: Cooperative Multi-agent Q-learning with Bidirectional
  Action-Dependency.
\newblock In \emph{Proceedings of the AAAI Conference on Artificial
  Intelligence}.

\bibitem[{Littman(1994)}]{10.5555/3091574.3091594}
Littman, M.~L. 1994.
\newblock Markov Games as a Framework for Multi-Agent Reinforcement Learning.
\newblock In \emph{Proceedings of the Eleventh International Conference on
  International Conference on Machine Learning}, ICML'94, 157–163. San
  Francisco, CA, USA: Morgan Kaufmann Publishers Inc.
\newblock ISBN 1558603352.

\bibitem[{Lowe et~al.(2017)Lowe, Wu, Tamar, Harb, Abbeel, and
  Mordatch}]{10.5555/3295222.3295385}
Lowe, R.; Wu, Y.; Tamar, A.; Harb, J.; Abbeel, P.; and Mordatch, I. 2017.
\newblock Multi-Agent Actor-Critic for Mixed Cooperative-Competitive
  Environments.
\newblock In \emph{Proceedings of the 31st International Conference on Neural
  Information Processing Systems}, NIPS'17, 6382–6393. Red Hook, NY, USA:
  Curran Associates Inc.
\newblock ISBN 9781510860964.

\bibitem[{Ma and Wu(2020)}]{DBLP:conf/atal/MaW20}
Ma, J.; and Wu, F. 2020.
\newblock Feudal Multi-Agent Deep Reinforcement Learning for Traffic Signal
  Control.
\newblock In Seghrouchni, A. E.~F.; Sukthankar, G.; An, B.; and Yorke{-}Smith,
  N., eds., \emph{Proceedings of the 19th International Conference on
  Autonomous Agents and Multiagent Systems, {AAMAS} '20, Auckland, New Zealand,
  May 9-13, 2020}, 816--824. International Foundation for Autonomous Agents and
  Multiagent Systems.

\bibitem[{Müller, Ohström, and Lindenberger(2021)}]{MULLER2021661}
Müller, V.; Ohström, K.-R.~P.; and Lindenberger, U. 2021.
\newblock Interactive brains, social minds: Neural and physiological mechanisms
  of interpersonal action coordination.
\newblock \emph{Neuroscience \& Biobehavioral Reviews}, 128: 661--677.

\bibitem[{Oliehoek and Amato(2016)}]{10.5555/2967142}
Oliehoek, F.~A.; and Amato, C. 2016.
\newblock \emph{A Concise Introduction to Decentralized POMDPs}.
\newblock Springer Publishing Company, Incorporated, 1st edition.
\newblock ISBN 3319289276.

\bibitem[{Peng et~al.(2021)Peng, Rashid, Schroeder~de Witt, Kamienny, Torr,
  B{\"o}hmer, and Whiteson}]{peng2021facmac}
Peng, B.; Rashid, T.; Schroeder~de Witt, C.; Kamienny, P.-A.; Torr, P.;
  B{\"o}hmer, W.; and Whiteson, S. 2021.
\newblock Facmac: Factored multi-agent centralised policy gradients.
\newblock \emph{Advances in Neural Information Processing Systems}, 34:
  12208--12221.

\bibitem[{Rashid et~al.(2020)Rashid, Samvelyan, De~Witt, Farquhar, Foerster,
  and Whiteson}]{10.5555/3455716.3455894}
Rashid, T.; Samvelyan, M.; De~Witt, C.~S.; Farquhar, G.; Foerster, J.; and
  Whiteson, S. 2020.
\newblock Monotonic Value Function Factorisation for Deep Multi-Agent
  Reinforcement Learning.
\newblock \emph{J. Mach. Learn. Res.}, 21(1).

\bibitem[{Ruan et~al.(2022)Ruan, Du, Xiong, Xing, Li, Meng, Zhang, Wang, and
  Xu}]{10.5555/3535850.3535976}
Ruan, J.; Du, Y.; Xiong, X.; Xing, D.; Li, X.; Meng, L.; Zhang, H.; Wang, J.;
  and Xu, B. 2022.
\newblock GCS: Graph-Based Coordination Strategy for Multi-Agent Reinforcement
  Learning.
\newblock In \emph{Proceedings of the 21st International Conference on
  Autonomous Agents and Multiagent Systems}, AAMAS '22, 1128–1136. Richland,
  SC: International Foundation for Autonomous Agents and Multiagent Systems.
\newblock ISBN 9781450392136.

\bibitem[{Schulman et~al.(2018)Schulman, Moritz, Levine, Jordan, and
  Abbeel}]{schulman2018highdimensional}
Schulman, J.; Moritz, P.; Levine, S.; Jordan, M.; and Abbeel, P. 2018.
\newblock High-Dimensional Continuous Control Using Generalized Advantage
  Estimation.
\newblock arXiv:1506.02438.

\bibitem[{Schulman et~al.(2017)Schulman, Wolski, Dhariwal, Radford, and
  Klimov}]{schulman2017proximal}
Schulman, J.; Wolski, F.; Dhariwal, P.; Radford, A.; and Klimov, O. 2017.
\newblock Proximal Policy Optimization Algorithms.
\newblock arXiv:1707.06347.

\bibitem[{Shalev{-}Shwartz, Shammah, and
  Shashua(2016)}]{DBLP:journals/corr/Shalev-ShwartzS16a}
Shalev{-}Shwartz, S.; Shammah, S.; and Shashua, A. 2016.
\newblock Safe, Multi-Agent, Reinforcement Learning for Autonomous Driving.
\newblock \emph{CoRR}, abs/1610.03295.

\bibitem[{Son et~al.(2019)Son, Kim, Kang, Hostallero, and Yi}]{pmlr-v97-son19a}
Son, K.; Kim, D.; Kang, W.~J.; Hostallero, D.~E.; and Yi, Y. 2019.
\newblock {QTRAN}: Learning to Factorize with Transformation for Cooperative
  Multi-Agent Reinforcement Learning.
\newblock In Chaudhuri, K.; and Salakhutdinov, R., eds., \emph{Proceedings of
  the 36th International Conference on Machine Learning}, volume~97 of
  \emph{Proceedings of Machine Learning Research}, 5887--5896. PMLR.

\bibitem[{Sunehag et~al.(2018)Sunehag, Lever, Gruslys, Czarnecki, Zambaldi,
  Jaderberg, Lanctot, Sonnerat, Leibo, Tuyls, and
  Graepel}]{10.5555/3237383.3238080}
Sunehag, P.; Lever, G.; Gruslys, A.; Czarnecki, W.~M.; Zambaldi, V.; Jaderberg,
  M.; Lanctot, M.; Sonnerat, N.; Leibo, J.~Z.; Tuyls, K.; and Graepel, T. 2018.
\newblock Value-Decomposition Networks For Cooperative Multi-Agent Learning
  Based On Team Reward.
\newblock In \emph{Proceedings of the 17th International Conference on
  Autonomous Agents and MultiAgent Systems}, AAMAS '18, 2085–2087. Richland,
  SC: International Foundation for Autonomous Agents and Multiagent Systems.

\bibitem[{Sutton and Barto(2018)}]{sutton2018reinforcement}
Sutton, R.~S.; and Barto, A.~G. 2018.
\newblock \emph{Reinforcement learning: An introduction}.
\newblock MIT press.

\bibitem[{Sutton et~al.(1999)Sutton, McAllester, Singh, and
  Mansour}]{NIPS1999_464d828b}
Sutton, R.~S.; McAllester, D.; Singh, S.; and Mansour, Y. 1999.
\newblock Policy Gradient Methods for Reinforcement Learning with Function
  Approximation.
\newblock In Solla, S.; Leen, T.; and M\"{u}ller, K., eds., \emph{Advances in
  Neural Information Processing Systems}, volume~12. MIT Press.

\bibitem[{Vaswani et~al.(2017)Vaswani, Shazeer, Parmar, Uszkoreit, Jones,
  Gomez, Kaiser, and Polosukhin}]{NIPS2017_3f5ee243}
Vaswani, A.; Shazeer, N.; Parmar, N.; Uszkoreit, J.; Jones, L.; Gomez, A.~N.;
  Kaiser, L.~u.; and Polosukhin, I. 2017.
\newblock Attention is All you Need.
\newblock In Guyon, I.; Luxburg, U.~V.; Bengio, S.; Wallach, H.; Fergus, R.;
  Vishwanathan, S.; and Garnett, R., eds., \emph{Advances in Neural Information
  Processing Systems}, volume~30. Curran Associates, Inc.

\bibitem[{Wan et~al.(2022)Wan, Liu, Chen, Lan, and Zheng}]{pmlr-v162-wan22c}
Wan, L.; Liu, Z.; Chen, X.; Lan, X.; and Zheng, N. 2022.
\newblock Greedy based Value Representation for Optimal Coordination in
  Multi-agent Reinforcement Learning.
\newblock In Chaudhuri, K.; Jegelka, S.; Song, L.; Szepesvari, C.; Niu, G.; and
  Sabato, S., eds., \emph{Proceedings of the 39th International Conference on
  Machine Learning}, volume 162 of \emph{Proceedings of Machine Learning
  Research}, 22512--22535. PMLR.

\bibitem[{Wang et~al.(2021{\natexlab{a}})Wang, Ren, Liu, Yu, and
  Zhang}]{wang2021qplex}
Wang, J.; Ren, Z.; Liu, T.; Yu, Y.; and Zhang, C. 2021{\natexlab{a}}.
\newblock {\{}QPLEX{\}}: Duplex Dueling Multi-Agent Q-Learning.
\newblock In \emph{International Conference on Learning Representations}.

\bibitem[{Wang, Ye, and Lu(2023)}]{wang2023more}
Wang, J.; Ye, D.; and Lu, Z. 2023.
\newblock More Centralized Training, Still Decentralized Execution: Multi-Agent
  Conditional Policy Factorization.
\newblock In \emph{The Eleventh International Conference on Learning
  Representations}.

\bibitem[{Wang et~al.(2023)Wang, Tian, Wan, Wen, Wang, and
  Zhang}]{wang2023order}
Wang, X.; Tian, Z.; Wan, Z.; Wen, Y.; Wang, J.; and Zhang, W. 2023.
\newblock Order Matters: Agent-by-agent Policy Optimization.
\newblock In \emph{The Eleventh International Conference on Learning
  Representations}.

\bibitem[{Wang et~al.(2021{\natexlab{b}})Wang, Han, Wang, Dong, and
  Zhang}]{wang2021dop}
Wang, Y.; Han, B.; Wang, T.; Dong, H.; and Zhang, C. 2021{\natexlab{b}}.
\newblock {\{}DOP{\}}: Off-Policy Multi-Agent Decomposed Policy Gradients.
\newblock In \emph{International Conference on Learning Representations}.

\bibitem[{Wei et~al.(2018)Wei, Wicke, Freelan, and
  Luke}]{DBLP:conf/aaaiss/WeiWFL18}
Wei, E.; Wicke, D.; Freelan, D.; and Luke, S. 2018.
\newblock Multiagent Soft Q-Learning.
\newblock In \emph{2018 {AAAI} Spring Symposia, Stanford University, Palo Alto,
  California, USA, March 26-28, 2018}. {AAAI} Press.

\bibitem[{Wen et~al.(2022)Wen, Kuba, Lin, Zhang, Wen, Wang, and
  Yang}]{NEURIPS2022_69413f87}
Wen, M.; Kuba, J.; Lin, R.; Zhang, W.; Wen, Y.; Wang, J.; and Yang, Y. 2022.
\newblock Multi-Agent Reinforcement Learning is a Sequence Modeling Problem.
\newblock In Koyejo, S.; Mohamed, S.; Agarwal, A.; Belgrave, D.; Cho, K.; and
  Oh, A., eds., \emph{Advances in Neural Information Processing Systems},
  volume~35, 16509--16521. Curran Associates, Inc.

\bibitem[{Yang et~al.(2022)Yang, Dong, Ren, Wang, Wang, and
  Zhang}]{pmlr-v162-yang22a}
Yang, Q.; Dong, W.; Ren, Z.; Wang, J.; Wang, T.; and Zhang, C. 2022.
\newblock Self-Organized Polynomial-Time Coordination Graphs.
\newblock In Chaudhuri, K.; Jegelka, S.; Song, L.; Szepesvari, C.; Niu, G.; and
  Sabato, S., eds., \emph{Proceedings of the 39th International Conference on
  Machine Learning}, volume 162 of \emph{Proceedings of Machine Learning
  Research}, 24963--24979. PMLR.

\bibitem[{Ye et~al.(2023)Ye, Li, Wang, and Zhang}]{ye2023global}
Ye, J.; Li, C.; Wang, J.; and Zhang, C. 2023.
\newblock Towards Global Optimality in Cooperative MARL with the Transformation
  And Distillation Framework.
\newblock arXiv:2207.11143.

\bibitem[{Yu et~al.(2022)Yu, Velu, Vinitsky, Gao, Wang, Bayen, and
  WU}]{NEURIPS2022_9c1535a0}
Yu, C.; Velu, A.; Vinitsky, E.; Gao, J.; Wang, Y.; Bayen, A.; and WU, Y. 2022.
\newblock The Surprising Effectiveness of PPO in Cooperative Multi-Agent Games.
\newblock In Koyejo, S.; Mohamed, S.; Agarwal, A.; Belgrave, D.; Cho, K.; and
  Oh, A., eds., \emph{Advances in Neural Information Processing Systems},
  volume~35, 24611--24624. Curran Associates, Inc.

\bibitem[{Yu et~al.(2023)Yu, Yang, Gao, Chen, Li, Liu, Xiang, Huang, Yang, Wu,
  and Wang}]{10.5555/3545946.3598752}
Yu, C.; Yang, X.; Gao, J.; Chen, J.; Li, Y.; Liu, J.; Xiang, Y.; Huang, R.;
  Yang, H.; Wu, Y.; and Wang, Y. 2023.
\newblock Asynchronous Multi-Agent Reinforcement Learning for Efficient
  Real-Time Multi-Robot Cooperative Exploration.
\newblock In \emph{Proceedings of the 2023 International Conference on
  Autonomous Agents and Multiagent Systems}, AAMAS '23, 1107–1115. Richland,
  SC: International Foundation for Autonomous Agents and Multiagent Systems.
\newblock ISBN 9781450394321.

\bibitem[{Zang et~al.(2023)Zang, He, Li, Fu, Fu, and
  Xing}]{10.5555/3545946.3598674}
Zang, Y.; He, J.; Li, K.; Fu, H.; Fu, Q.; and Xing, J. 2023.
\newblock Sequential Cooperative Multi-Agent Reinforcement Learning.
\newblock In \emph{Proceedings of the 2023 International Conference on
  Autonomous Agents and Multiagent Systems}, AAMAS '23, 485–493. Richland,
  SC: International Foundation for Autonomous Agents and Multiagent Systems.
\newblock ISBN 9781450394321.

\bibitem[{Zhang et~al.(2021)Zhang, Li, Wang, Xie, and Lu}]{pmlr-v139-zhang21m}
Zhang, T.; Li, Y.; Wang, C.; Xie, G.; and Lu, Z. 2021.
\newblock FOP: Factorizing Optimal Joint Policy of Maximum-Entropy Multi-Agent
  Reinforcement Learning.
\newblock In Meila, M.; and Zhang, T., eds., \emph{Proceedings of the 38th
  International Conference on Machine Learning}, volume 139 of
  \emph{Proceedings of Machine Learning Research}, 12491--12500. PMLR.

\end{thebibliography}

\end{document}